\newtheorem{thm}{Theorem}
\newtheorem{lemma}{Lemma}
\newtheorem{cor}{Corollary}
\newtheorem{prop}{Proposition}
\newtheorem{question}{Question}
\theoremstyle{definition} 
\newtheorem{example}{Example}
\begin{document}

\title{A Generalized Multidimensional Chinese Remainder Theorem (MD-CRT) for Multiple Integer Vectors}

\author{Guangpu Guo and Xiang-Gen Xia,
\IEEEmembership{Fellow}, \IEEEmembership{IEEE} 
\thanks{G. Guo and X.-G. Xia are with the Department of Electrical and Computer Engineering,
  University of Delaware, Newark, DE 19716, USA
  (e-mails: guangpu@udel.edu and xxia@ece.udel.edu).
  This work was supported in part by the National Science Foundation (NSF) under Grant CCF-2246917.
}
}
\maketitle

\begin{abstract}
  Chinese remainder theorem (CRT) is widely applied in cryptography, coding theory, and signal processing. It has been extended to the multidimensional CRT (MD-CRT), which reconstructs an integer vector from its vector remainders modulo multiple integer matrices. This paper investigates a generalized MD-CRT for multiple integer vectors, where the goal is to determine multiple integer vectors from multiple vector residue sets modulo multiple integer matrices.
  Comparing to the existing generalized CRT for multiple scalar integers, the challenge is that the moduli in MD-CRT are matrices that do not commute and the corresponding uniquely determinable range is multidimensional and the inclusion relationship is much more complicated. In this paper,
  we address two fundamental questions regarding the generalized MD-CRT. The first question concerns the uniquely determinable range of multiple integer vectors when no prior information about them is available. The second question is about the conditions under which the maximal possible dynamic range can be achieved. 
  To answer these two questions, we first derive a uniquely determinable range without prior information and accordingly propose an algorithm  to achieve it. A special case involving only two integer vectors is investigated for the second question, leading to a new condition for achieving the maximal possible dynamic range.
  Interestingly, this newly obtained condition, when the dimension is reduced to $1$, is even better
  than the existing ones for the conventional generalized CRT for scalar integers. 
  These results may have applications for frequency detection in multidimensional signal processing.
\end{abstract}

\begin{IEEEkeywords}
Chinese remainder theorem (CRT), multidimensional Chinese remainder theorem (MD-CRT), multiple frequency detection, dynamic range, undersampling.
\end{IEEEkeywords}

\IEEEpeerreviewmaketitle

\section{Introduction}\label{s1}

Chinese remainder theorem (CRT) \cite{crt1} has many applications in, for example, cryptography,  coding theory, and signal processing \cite{crt1, crt2, radar_book,xia99,gangli1,fft2,wenchaoli,radeee,xiaoli,congling1,congling2,congling3,lugan2}.
It is to reconstruct a large positive integer $N$ from its remainders modulo several positive integer moduli $0<m_1<m_2<\cdots<m_{\gamma}$ and the large integer $N$ can be uniquely reconstructed if and only if it is smaller than the least common multiple (lcm), called the dynamic range, of all the moduli. 
Many different extensions have been proposed in the past, including CRT for multiple integers \cite{xia99,zhou,xia00,liao07,xiao14,wang15,xiao15}, called generalized CRT, and robust CRT \cite{xia07,li09,wang10,xiao2014,xiao17,li16,xhs18}. 
CRT has been also  extended in \cite{ar,PPV1,MD1,MD2} from scalar integers to vectors of integers, called multidimensional CRT (MD-CRT), which is to reconstruct an integer vector from its vector remainders modulo multiple integer matrices, referred to as matrix moduli. An integer vector can be uniquely reconstructed if and only if it is in the fundamental parallelepiped (FPD) of a least common right multiple (lcrm) of all the matrix moduli \cite{MD1, MD2}. This FPD of an lcrm is a uniquely determinable range and the size of this FPD, i.e., the number of integer vectors in FPD, is called the dynamic range as well, which is equal to the absolute determinant value of the lcrm of all the matrix moduli. The MD-CRT in \cite{MD1,MD2} is for general integer matrix moduli, while the MD-CRT in \cite{ar,PPV1} is for integer matrix moduli that can be simultaneously diagonalized. A novel family of pairwise co-prime integer matrices of any dimension was recently obtained in \cite{guo25}.

In this paper, motivated by the multiple frequency detection of multidimensional undersampled waveforms that may occur in applications as we shall see in Section \ref{s2}, we investigate a generalized MD-CRT for multiple integer vectors, which has not been addressed before. Unlike the MD-CRT for a single integer vector, this problem involves reconstructing multiple integer vectors from several sets of vector remainders called vector residue sets. The vector remainders in each vector residue set are derived from all the integer vectors, needed to determine, modulo the same matrix modulus. However, the correspondence between the vector remainders within the same vector residue set and the vectors needed to determine is unknown. To address the ambiguity caused by the unknown correspondence between the vectors and their vector remainders in each vector residue set, the uniquely determinable range for multiple vectors may be smaller than that of MD-CRT for a single integer vector, if there is no prior information to help clarify the unknown correspondence, thus leads to a lower dynamic range. In addition, it may be possible that the same dynamic range as MD-CRT for a single integer vector, i.e., the maximal possible dynamic range of MD-CRT for multiple integer vectors, can be reached, if we know some prior information on the vectors that we want to determine.  

Two questions about the generalized MD-CRT are raised. The first question is what is a uniquely determinable range of MD-CRT for multiple integer vectors if there is no prior information on these vectors? The second question is under what conditions the maximal possible dynamic range of MD-CRT for multiple integer vectors can be reached?

In this paper, we first present a uniquely determinable range of MD-CRT for multiple integer vectors, which gives an answer to the first question. An efficient algorithm is also introduced correspondingly. These results, when reduced to one dimension, are consistent with the result of generalized CRT for multiple integers in \cite{xia99,xia00}. It is known from the results in \cite{xia99,xia00} that for the one dimensional generalized CRT, the more unknown integers to determine, the less dynamic range is.
In other words, a uniquely determinable region for more unknown integers to determine is included in that for less unknown integers to determine.
This may not be true in general for generalized MD-CRT studied in this paper, which is because the FPD is multidimensional and two different FPDs may not have a strict
inclusion relationship as the one dimensional case. It is a key difference with the existing generalized CRT for multiple scalar integers, in, for example, \cite{xia99,zhou,xia00,liao07,xiao14,wang15,xiao15}.

As for the second question, we consider a special case of MD-CRT for two integer vectors and present a condition to achieve the maximal possible dynamic range with a detailed determination algorithm.
Interestingly, this condition, when reduced to one dimension, is even better than all the previous results in \cite{zhou,xiao15}. The previous results for one dimensional generalized CRT indicate that the maximal possible dynamic range can be achieved when the difference between the two unknown integers is small. The new result obtained in this paper, however, demonstrates that the maximal possible dynamic range can be achieved not only when the difference is small but also when the difference is large and close to the dynamic range. Moreover, the condition on the difference of two integers we obtain is strictly weaker than that obtained in \cite{zhou,xiao15}. In addition, when a size constraint is imposed on the moduli, that is, the moduli must be smaller than a certain number, the condition proposed in this paper is not only strictly weaker than those in \cite{xiao15} but also yields a larger dynamic range.

The remaining of this paper is organized as follows. In Section \ref{s2}, we first introduce some necessary preliminaries. We also present the motivated problem of multiple frequency detection of undersampled multidimensional waveforms and formalize the questions that we address in this paper. In Section \ref{s3}, we focus on the generalized MD-CRT for multiple integer vectors without prior information and present a uniquely determinable range for multiple integer vectors. In Section \ref{s4}, we consider the generalized MD-CRT for two integer vectors with prior information and a condition to achieve the maximal possible dynamic range is given. We also analyze the advantages of our result, when reduced to one dimension, over all the previous results. In Section \ref{s5}, we conclude this paper.

\section{Preliminaries and Problem Formulation}\label{s2}

We first have some notations. $\mathbb{Z}$ denotes the set of all integers and $\mathbb{R}$ denotes the set of all real numbers. Lowercase boldface letters represent vectors, while uppercase boldface letters represent matrices. All vectors and matrices in this paper are $D$ dimensional integer vectors and $D\times D$ dimensional integer matrices, respectively, unless otherwise specified. And diag stands for a diagonal matrix, $^{\top}$ means the transpose of vectors or matrices. Below we introduce some necessary concepts on integer matrices and for details, see, for example,
\cite{PPV1, MD1, MD2, matrix, remainder}.

\subsection{Preliminaries}

1) \textbf{Least common right multiple (lcrm)}: A non-singular integer matrix $\mathbf{A}$ is a right multiple of an integer matrix $\mathbf{M}$, if there exists a non-singular integer matrix $\mathbf{P}$ such that $\mathbf{A} = \mathbf{M}\mathbf{P}$. If $\mathbf{A}$ is a right multiple of each of all $\gamma \geq 2$ integer matrices $\mathbf{M}_1, \mathbf{M}_2, \dots, \mathbf{M}_{\gamma}$, $\mathbf{A}$ is called a common right multiple (crm) of $\mathbf{M}_1, \mathbf{M}_2, \dots, \mathbf{M}_{\gamma}$. Additionally, $\mathbf{A}$ is a {\em least common right multiple} (lcrm) of $\mathbf{M}_1, \mathbf{M}_2, \dots, \mathbf{M}_{\gamma}$, if any other crm of them is a right multiple of $\mathbf{A}$. If $\mathbf{A}$ is an lcrm of a group of integer matrices, $\mathbf{AU}$ is also an lcrm of them when $\mathbf{U}$ is a unimodular matrix, i.e., $\mathbf{U}$ is an integer matrix and its determinant is either $1$ or $-1$, which means that lcrm is not unique but the absolute determinant value of lcrm is unique. A detailed algorithm to calculate an lcrm of two integer matrices can be found in \cite{MD2,remainder}.

2) \textbf{Lattice}: For a non-singular integer matrix $\mathbf{M}$, its associated lattice is defined as follows:
\[
LAT(\mathbf{M})=\{\mathbf{M}\mathbf{n} \mid \mathbf{n}\in \mathbb{Z}^{D}\},
\]
and a scaled lattice is defined as
\[
\lambda LAT(\mathbf{M})=\{\lambda\mathbf{k} \mid \mathbf{k}\in LAT(\mathbf{M}) \},
\]
where $\lambda$ is a positive real number. For a positive integer $a$, when $\lambda=1/a$, $LAT(\mathbf{M})\subset \lambda LAT(\mathbf{M})$; when $\lambda=a$, $\lambda LAT(\mathbf{M})\subset LAT(\mathbf{M})$.

\begin{prop}\cite{matrix}\label{samelattice}
    Given two non-singular integer matrices $\mathbf{M}_1$ and $\mathbf{M}_2$, $LAT(\mathbf{M}_1) = LAT(\mathbf{M}_2)$ if and only if $\mathbf{M}_1^{-1}\mathbf{M}_2$ is unimodular, i.e., it is an integer matrix with determinant $1$ or $-1$.
\end{prop}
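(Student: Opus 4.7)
The plan is a straightforward two-direction argument based on the definition of $LAT(\cdot)$ as the $\mathbb{Z}$-span of the columns of the matrix, together with the fact that a unimodular matrix and its inverse are both integer matrices.

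For the easy direction $(\Leftarrow)$, I would assume $\mathbf{U}:=\mathbf{M}_1^{-1}\mathbf{M}_2$ is unimodular, so $\mathbf{M}_2=\mathbf{M}_1\mathbf{U}$. Then for any $\mathbf{n}\in\mathbb{Z}^D$, $\mathbf{M}_2\mathbf{n}=\mathbf{M}_1(\mathbf{U}\mathbf{n})$ lies in $LAT(\mathbf{M}_1)$ because $\mathbf{U}\mathbf{n}\in\mathbb{Z}^D$, giving $LAT(\mathbf{M}_2)\subseteq LAT(\mathbf{M}_1)$. The reverse inclusion follows by the symmetric argument using $\mathbf{M}_1=\mathbf{M}_2\mathbf{U}^{-1}$, where $\mathbf{U}^{-1}$ is again an integer matrix since $|\det\mathbf{U}|=1$.

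For the forward direction $(\Rightarrow)$, I would assume $LAT(\mathbf{M}_1)=LAT(\mathbf{M}_2)$ and consider each column of $\mathbf{M}_2$, i.e.\ $\mathbf{M}_2\mathbf{e}_i$ for the standard basis vectors $\mathbf{e}_i$. Each such column lies in $LAT(\mathbf{M}_2)=LAT(\mathbf{M}_1)$, so it equals $\mathbf{M}_1\mathbf{n}_i$ for some $\mathbf{n}_i\in\mathbb{Z}^D$. Assembling the $\mathbf{n}_i$ as columns of an integer matrix $\mathbf{N}$ yields $\mathbf{M}_2=\mathbf{M}_1\mathbf{N}$, hence $\mathbf{N}=\mathbf{M}_1^{-1}\mathbf{M}_2$ is an integer matrix. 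By the symmetric argument, $\mathbf{N}':=\mathbf{M}_2^{-1}\mathbf{M}_1$ is also an integer matrix, and clearly $\mathbf{N}\mathbf{N}'=\mathbf{I}$. Taking determinants gives $\det(\mathbf{N})\det(\mathbf{N}')=1$ in $\mathbb{Z}$, forcing $\det(\mathbf{N})=\pm 1$, so $\mathbf{N}=\mathbf{M}_1^{-1}\mathbf{M}_2$ is unimodular.

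There is no real obstacle here; the only subtle point worth being careful about is in the forward direction, namely justifying that $\mathbf{M}_1^{-1}\mathbf{M}_2$ is an \emph{integer} matrix (not merely rational) before invoking the determinant identity. The column-by-column argument above makes this transparent, after which the $\det(\mathbf{N})=\pm 1$ conclusion is immediate. This is a standard lattice-basis-change fact from the theory of integer matrices cited in \cite{matrix}, so I would expect the authors to either give this short argument or simply cite it.
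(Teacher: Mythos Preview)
Your proof is correct and complete. The paper does not give its own proof of this proposition; it simply states the result and cites \cite{matrix}, so there is no approach to compare against---your anticipation that the authors might just cite it was exactly right.
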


\begin{prop}\cite{matrix}\label{caplattice}
   Let $\mathbf{M}_j$ for $j=1,2,\cdots,\gamma$, be $\gamma$ non-singular integer matrices and $\mathbf{R}$ is an lcrm of all the matrices $\mathbf{M}_j$. Then, $LAT(\mathbf{R})=\bigcap\limits_{j=1}^{\gamma}LAT(\mathbf{M}_j)$.
\end{prop}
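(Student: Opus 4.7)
The plan is to establish the two inclusions separately, using the fact that "right multiple" and "lattice containment" are essentially dual notions, together with the minimality built into the definition of lcrm.

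First I would dispatch the easy inclusion $LAT(\mathbf{R})\subseteq \bigcap_{j=1}^{\gamma} LAT(\mathbf{M}_j)$. Since $\mathbf{R}$ is a crm of the $\mathbf{M}_j$, for each $j$ there is a non-singular integer matrix $\mathbf{P}_j$ with $\mathbf{R}=\mathbf{M}_j\mathbf{P}_j$. For any $\mathbf{n}\in\mathbb{Z}^D$ this gives $\mathbf{R}\mathbf{n}=\mathbf{M}_j(\mathbf{P}_j\mathbf{n})\in LAT(\mathbf{M}_j)$, so $LAT(\mathbf{R})\subseteq LAT(\mathbf{M}_j)$ for every $j$ and hence $LAT(\mathbf{R})$ lies in the intersection.

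For the reverse direction I would first record the lemma that, for non-singular integer matrices $\mathbf{A},\mathbf{M}$, $\mathbf{A}$ is a right multiple of $\mathbf{M}$ if and only if $LAT(\mathbf{A})\subseteq LAT(\mathbf{M})$. One direction is the calculation just given; for the other, each column $\mathbf{a}_i$ of $\mathbf{A}$ lies in $LAT(\mathbf{A})\subseteq LAT(\mathbf{M})$, so $\mathbf{a}_i=\mathbf{M}\mathbf{p}_i$ for some $\mathbf{p}_i\in\mathbb{Z}^D$; stacking the $\mathbf{p}_i$ gives an integer matrix $\mathbf{P}$ with $\mathbf{A}=\mathbf{M}\mathbf{P}$, and $\mathbf{P}=\mathbf{M}^{-1}\mathbf{A}$ is non-singular because $\mathbf{A}$ and $\mathbf{M}$ are.

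Set $L:=\bigcap_{j=1}^{\gamma} LAT(\mathbf{M}_j)$. Each $LAT(\mathbf{M}_j)$ is a full-rank sublattice of $\mathbb{Z}^D$ of finite index $|\det\mathbf{M}_j|$, so $L$ is a full-rank sublattice of $\mathbb{Z}^D$ and therefore admits a $\mathbb{Z}$-basis of $D$ vectors, which I assemble into the columns of a non-singular integer matrix $\mathbf{S}$ with $LAT(\mathbf{S})=L$. Applying the lemma in the reverse direction, $LAT(\mathbf{S})\subseteq LAT(\mathbf{M}_j)$ for every $j$ means $\mathbf{S}$ is a crm of the $\mathbf{M}_j$. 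The defining property of the lcrm $\mathbf{R}$ then yields an integer matrix $\mathbf{U}$ with $\mathbf{S}=\mathbf{R}\mathbf{U}$, whence $LAT(\mathbf{S})\subseteq LAT(\mathbf{R})$. Combined with the first step, $\bigcap_{j=1}^{\gamma} LAT(\mathbf{M}_j)=LAT(\mathbf{S})\subseteq LAT(\mathbf{R})\subseteq \bigcap_{j=1}^{\gamma} LAT(\mathbf{M}_j)$, so all containments are equalities.

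The only non-mechanical ingredient is the assertion that a full-rank integer sublattice of $\mathbb{Z}^D$ is generated by $D$ integer vectors, i.e., equals $LAT(\mathbf{S})$ for some non-singular integer $\mathbf{S}$. This is standard (Hermite normal form, or a classical result on subgroups of $\mathbb{Z}^D$) and is available in the references on integer matrices cited in the preliminaries; this is the step I would be most careful to justify, since everything else is routine matrix manipulation and the lcrm definition.
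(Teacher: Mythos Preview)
Your argument is correct. The two inclusions are handled cleanly: the inclusion $LAT(\mathbf{R})\subseteq\bigcap_j LAT(\mathbf{M}_j)$ is immediate from $\mathbf{R}=\mathbf{M}_j\mathbf{P}_j$, and for the reverse you correctly identify and prove the key equivalence (``$\mathbf{A}$ is a right multiple of $\mathbf{M}$'' $\Leftrightarrow$ ``$LAT(\mathbf{A})\subseteq LAT(\mathbf{M})$''), then invoke the minimality clause in the definition of lcrm. The one structural fact you flag --- that a finite-index subgroup of $\mathbb{Z}^D$ is free of rank $D$ --- is indeed the only nontrivial external input, and your caution there is well placed; note that you can sidestep even having to argue that $L$ has full rank, since the easy inclusion already gives $LAT(\mathbf{R})\subseteq L$ with $\mathbf{R}$ non-singular.

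As for comparison with the paper: the paper does not supply a proof of this proposition at all. It is stated with a citation to MacDuffee's \emph{The Theory of Matrices} and used as a black box in later arguments (e.g., in the proof of Lemma~\ref{lm:1}). So there is no ``paper's own proof'' to compare against; your write-up simply fills in what the paper defers to the literature, and does so along the standard lines one would find in that reference.
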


3) \textbf{Fundamental parallelepiped (FPD)}: For a non-singular integer matrix $\mathbf{M}$, the fundamental parallelepiped of $\mathbf{M}$ is defined as follows:
\[
\mathcal{N}(\mathbf{M})= \left\{ \mathbf{k} \mid \mathbf{k} = \mathbf{M} \mathbf{x}, \mathbf{x} \in [0,1)^D \text{and} \ \mathbf{k} \in \mathbb{Z}^D \right\}.
\]
There are totally $|\det(\mathbf{M})|$ vectors in $\mathcal{N}(\mathbf{M})$. In one dimensional case, for a positive integer $m$, $\mathcal{N}(m)=\{0,1,2,\cdots,m-1\}$. 

4) \textbf{Division representation of integer vectors}: For any given integer vector $\mathbf{f}$, it has a unique representation with respect to a non-singular integer matrix $\mathbf{M}$:
\[
\mathbf{f}= \mathbf{M}\mathbf{n}+\mathbf{r}, \quad \mathbf{r} \in \mathcal{N}(\mathbf{M}),
\]
where $\mathbf{n}$ and $\mathbf{r}$ are integer vectors and $\mathbf{r}$ is called the vector remainder of $\mathbf{f}$ modulo $\mathbf{M}$. In applications, we take much more care of $\mathbf{r}$ and a formulation that is much easier to implementation to calculate $\mathbf{r}$ is given in \cite{remainder}:
\begin{equation}\label{remaider}
    \mathbf{r} = \mathbf{M} \left( \text{adj}(\mathbf{M}) \mathbf{m} \mod \det(\mathbf{M}) \right) / \det(\mathbf{M}),
\end{equation}
where the modulo operation is performed element-wisely. 

\subsection{Problem Formulation}

Consider the following multiple frequency multidimensional signal:
\begin{equation}\label{signal}
    x(\mathbf{t}) = \sum_{i=1}^{\rho} a_i \exp{(j2\pi \mathbf{f}_{i}^{\top}\mathbf{t})} + \omega(\mathbf{t}), \quad \mathbf{t} \in \mathbb{R}^{D},
\end{equation}
where $a_i$ is an unknown amplitude and $\mathbf{f}_i$ is a $D$ dimensional frequency vector, which is assumed to be an integer vector, for each $i = 1, 2,\cdots,\rho$. Additionally, $\omega(\mathbf{t})$ is an additive noise. We want to determine all the $\rho$ frequency vectors, $\mathbf{f}_i$, from multiple undersampled $D$ dimensional signals of $x(\mathbf{t})$ with sampling rates as low as possible as follows.

We first use $\gamma$ many $D\times D$ non-singular integer matrices $\mathbf{M}_1,\cdots,\mathbf{M}_{\gamma}$, called {\em sampling matrices}, to sample the signal in (\ref{signal}) and get the following $\gamma$ sampled signals:
\begin{equation}\label{md-undersample}
x_{j}[\mathbf{n}] = \sum_{i=1}^{\rho} a_i \exp{(j2\pi \mathbf{f}_{i}^\top \mathbf{M}_j^{-\top} \mathbf{n})} + \omega[\mathbf{M}_j^{-\top} \mathbf{n}], 
\end{equation}
$\mathbf{n}\in \mathbb{Z}^{D}, \ j=1,2,\cdots,\gamma.$ For each sampling matrix $\mathbf{M}_j$,
there are total $|\det (\mathbf{M}_j)|$ many sampled points per unit spatial volume of $\mathbb{R}^{D}$, which is called the {\em sampling rate} (or sampling density) of sampling matrix $\mathbf{M}_j$ for a multidimensional signal. 

Next, by applying the multidimensional DFT (MD-DFT) to each sampled signal $x_j[\mathbf{n}]$ with respect to $\mathbf{n} \in \mathcal{N}(\mathbf{M}_j^{\top})$, we obtain, for $\mathbf{k}\in \mathcal{N}(\mathbf{M}_j)$,
\begin{equation}
\begin{aligned}
    &X_j(\mathbf{k}) \\ 
    = &\sum_{i=1}^{\rho} \sum_{\mathbf{n} \in \mathcal{N}(\mathbf{M}_j^{\top})} a_i \exp(j 2 \pi \mathbf{f}_i^{\top} \mathbf{M}_j^{-\top} \mathbf{n}) \exp(-j 2 \pi \mathbf{k}^{\top} \mathbf{M}_j^{-\top} \mathbf{n})\\
   &+ \Omega_i(\mathbf{k}), \quad j=1,2,\cdots,\gamma,
\end{aligned}
\end{equation}
which can be simplified as \cite{ar} 
\begin{equation}\label{mddft}
    X_j(\mathbf{k}) = \sum_{i=1}^{\rho} a_i |\det(\mathbf{M}_j)| \delta(\mathbf{k} - \mathbf{r}_{i,j}) + \Omega_i(\mathbf{k}), 
\end{equation}
for $j=1,2,\cdots,\gamma$, where $\mathbf{r}_{i,j}$ is the integer vector remainder of
the integer frequency vector $\mathbf{f}_i$ modulo $\mathbf{M}_j$,
and $\delta (\mathbf{n})$ is the discrete delta function that is $1$ when
$\mathbf{n}={\bf 0}$ and $0$ otherwise.
From (\ref{mddft}), one can detect $\rho$ integer vector remainders
$\mathbf{r}_{i,j} \equiv \mathbf{f}_i \mod \mathbf{M}_j$, 
for $i=1,2,\cdots,\rho$, from each sampled signal $X_j(\mathbf{k})$. The above undersampling may occur in practical applications. One example is in SAR imaging of fast moving targets using planar antenna arrays. It is similar to the SAR imaging of fast moving targets in SAR imaging using linear co-prime arrays \cite{gangli1}, where the number of antennas determines the parameter estimation accuracy. Unfortunately  linear antenna arrays are not spatially efficient and planar antenna arrays may pack more antennas in a fixed radar platform space, such as an aircraft, and thus may improve the parameter estimation  accuracy. When co-prime multidimensional antenna arrays are used, the signal model (\ref{md-undersample}) may occur for the radar return signals after some radar signal processing, such as range compression. Another example is the recently active self-reset analog-to-digital converter (SR-ADC) using multi-channels for complex valued bandlimited signals \cite{lugan2}, which can be thought of as a special case of 2 dimensional undersampling.

Let
\begin{equation}
   \mathcal{S}_j(\mathbf{f}_1,\cdots,\mathbf{f}_{\rho})=\{\mathbf{r}_{i,j}\ |\ i=1,2,\cdots,\rho\} 
\end{equation}
be the set of all the detected integer vector remainders from the above $X_{j}(\mathbf{k})$, called vector residue set, for $j=1,2,\cdots,\gamma$, and
\begin{equation}
   \mathcal{S}(\mathbf{f}_1,\cdots,\mathbf{f}_{\rho})=\mathcal{S}_1(\mathbf{f}_1,\cdots,\mathbf{f}_{\rho}) \times \cdots \times \mathcal{S}_{\gamma}(\mathbf{f}_1,\cdots,\mathbf{f}_{\rho}), 
\end{equation}
which is called the residue set of $\mathbf{f}_i$, $1 \leq i \leq \rho$, modulo $\mathbf{M}_j$, $1 \leq j \leq \gamma$.
Now the problem becomes how to determine $\rho$ integer vectors $\mathbf{f}_1,\cdots,\mathbf{f}_{\rho}$ from $\mathcal{S}(\mathbf{f}_1,\cdots,\mathbf{f}_{\rho})$.

When $\rho=1$, there is only one frequency vector $\mathbf{f}$ in (\ref{signal}) to determine, which becomes the MD-CRT problem as follows.   

\begin{prop}\label{MD-CRT}(MD-CRT\cite{MD1})
    Given $\gamma$ matrix moduli $\mathbf{M}_j$ for $1 \leq j \leq \gamma$, which are arbitrary non-singular integer matrices. Let $\mathbf{R}$ be an lcrm of these $\gamma$ matrix moduli. For an integer vector $\mathbf{f} \in \mathbb{Z}^D$,  
it can be uniquely determined from its $\gamma$ integer vector remainders $\mathbf{r}_j \equiv \mathbf{f} \mod \mathbf{M}_j$, $1\leq j\leq \gamma$, if
and only if 
$\mathbf{f} \in \mathcal{N}(\mathbf{R})$.
\end{prop}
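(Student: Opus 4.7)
The plan is to prove both directions by translating ``same remainders modulo every $\mathbf{M}_j$'' into a single lattice condition, and then exploit that $\mathcal{N}(\mathbf{R})$ is a transversal of that lattice inside $\mathbb{Z}^D$. The engine that drives everything is Proposition \ref{caplattice}, which turns a conjunction of congruences into a single congruence modulo the lcrm.

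First I would establish the following equivalence as a short preliminary step: for two integer vectors $\mathbf{f}, \mathbf{g} \in \mathbb{Z}^D$, one has $\mathbf{r}_j(\mathbf{f}) = \mathbf{r}_j(\mathbf{g})$ for every $j=1,\dots,\gamma$ if and only if $\mathbf{f}-\mathbf{g} \in LAT(\mathbf{M}_j)$ for every $j$. The $(\Rightarrow)$ direction is immediate from the division representation $\mathbf{f} = \mathbf{M}_j \mathbf{n}_j + \mathbf{r}_j$, and the $(\Leftarrow)$ direction follows because the division representation is unique. Applying Proposition \ref{caplattice}, the condition $\mathbf{f}-\mathbf{g} \in \bigcap_j LAT(\mathbf{M}_j)$ is exactly $\mathbf{f}-\mathbf{g} \in LAT(\mathbf{R})$. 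Hence all candidates that share a common residue set are precisely the elements of the coset $\mathbf{f} + LAT(\mathbf{R})$.

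With this in hand, both directions of the proposition become routine. For the \emph{if} direction, suppose $\mathbf{f} \in \mathcal{N}(\mathbf{R})$ and some other integer vector $\mathbf{g}$ produces the same remainders $\mathbf{r}_j$ for all $j$. If $\mathbf{g}$ is also forced into $\mathcal{N}(\mathbf{R})$ (the implicit candidate set), the equivalence above gives $\mathbf{f}-\mathbf{g} \in LAT(\mathbf{R})$, but any two elements of $\mathcal{N}(\mathbf{R})$ differing by an element of $LAT(\mathbf{R})$ must coincide, by the uniqueness of the division representation with respect to $\mathbf{R}$; hence $\mathbf{f}=\mathbf{g}$. For the \emph{only if} direction, take any $\mathbf{f} \in \mathbb{Z}^D$ and write $\mathbf{f} = \mathbf{R}\mathbf{n} + \mathbf{r}$ with $\mathbf{r} \in \mathcal{N}(\mathbf{R})$. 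Since $\mathbf{R}\mathbf{n} \in LAT(\mathbf{R}) \subseteq LAT(\mathbf{M}_j)$ for every $j$, the vectors $\mathbf{f}$ and $\mathbf{r}$ share the same residue set. If $\mathbf{f}\notin\mathcal{N}(\mathbf{R})$, then $\mathbf{n}\neq \mathbf{0}$ and $\mathbf{r}\neq \mathbf{f}$ produces a genuine ambiguity, so $\mathbf{f}$ cannot be uniquely recovered.

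The only real subtlety, and hence the one place I would slow down, is justifying that $\mathcal{N}(\mathbf{R})$ really is a complete and non-redundant set of coset representatives of $\mathbb{Z}^D / LAT(\mathbf{R})$; this is the content of the division representation recalled in the preliminaries, and it is what allows the $(\mathbf{n},\mathbf{r})$ decomposition $\mathbf{f}=\mathbf{R}\mathbf{n}+\mathbf{r}$ to be unique. Everything else is bookkeeping around the key reduction $\bigcap_j LAT(\mathbf{M}_j) = LAT(\mathbf{R})$, which is already supplied by Proposition \ref{caplattice}. No computation with explicit matrix entries is needed.
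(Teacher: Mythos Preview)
Your proposal is correct and follows essentially the same approach as the paper. The paper defers the sufficiency to the cited reference and sketches the necessity by invoking the MD-CRT reconstruction algorithm to produce $\mathbf{f}'\in\mathcal{N}(\mathbf{R})$ with the same remainders; your version makes both directions self-contained by reducing everything to the lattice identity $\bigcap_j LAT(\mathbf{M}_j)=LAT(\mathbf{R})$ from Proposition~\ref{caplattice} and the fact that $\mathcal{N}(\mathbf{R})$ is a transversal of $\mathbb{Z}^D/LAT(\mathbf{R})$, which is exactly the underlying mechanism.
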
 

The sufficiency part is obtained in \cite{MD1}. The necessity part is based on the fact that if  
$\mathbf{f} \notin \mathcal{N}(\mathbf{R})$, by following the MD-CRT algorithm in \cite{MD1} we can obtain a vector $\mathbf{f}^{\prime} \in \mathcal{N}(\mathbf{R})$ such that $\mathbf{f}$ and $\mathbf{f}^{\prime}$ have the same vector remainders $\mathbf{r}_1,\cdots,\mathbf{r}_{\gamma}$ modulo matrix moduli $\mathbf{M}_1,\cdots,\mathbf{M}_{\gamma}$, respectively. 

For the reconstruction uniqueness, similar to the classical CRT, an lcrm associated with MD-CRT should be selected in advance. In the remaining paper, all involved lcrms are assumed to be fixed in advance.

From Prop. \ref{MD-CRT}, $\mathcal{N}(\mathbf{R})$ is a uniquely determinable range (or region) of $\mathbf{f}$. Although $\mathcal{N}(\mathbf{R})$ may not be the same for different lcrms of all the matrix moduli, 
the numbers of vectors in different sets $\mathcal{N}(\mathbf{R})$, i.e., the sizes of uniquely determinable ranges for the frequency vector, are always the same and equal to $|\det(\mathbf{R})|$, which is called the \textit{dynamic range} of the MD-CRT.

When $\rho >1$, the problem of determining $\mathbf{f}_1,\cdots,\mathbf{f}_{\rho}$ from $\mathcal{S}(\mathbf{f}_1,\cdots,\mathbf{f}_{\rho})$ is referred to as {\em generalized MD-CRT for multiple integer vectors}. In this case, there may be more than one vector in $\mathcal{S}_j(\mathbf{f}_1,\cdots,\mathbf{f}_{\rho})$.
Since, for each $j$, we detect all integer vector remainders in $\mathcal{S}_j(\mathbf{f}_1,\cdots,\mathbf{f}_{\rho})$ simultaneously from $X_{j}(\mathbf{k})$ using MD-DFT in (\ref{mddft}), we don't know which vector in this set corresponds to which $\mathbf{f}_i$ as its vector remainder, i.e., the correspondence between the vector remainders in $\mathcal{S}_j(\mathbf{f}_1,\cdots,\mathbf{f}_{\rho})$ and the vectors $\mathbf{f}_i$, $i=1,2,\cdots,\rho$, that we want to determine, is unknown. The difficulty of generalized MD-CRT for multiple integer vectors is how to address the unknown correspondence.

If there is no prior information on $\{\mathbf{f}_1,\cdots,\mathbf{f}_{\rho}\}$, we can't clarify the unknown correspondence in general. Due to the ambiguity caused by the unknown correspondence in each vector residue set $\mathcal{S}_j(\mathbf{f}_1,\cdots,\mathbf{f}_{\rho})$, the uniquely determinable range of $\{\mathbf{f}_1,\cdots,\mathbf{f}_{\rho}\}$ in this problem may be smaller than that of MD-CRT for a single vector, which also means a lower dynamic range. We first have the following question.

\begin{question}\label{q1}
    What is a uniquely determinable range of $\{\mathbf{f}_1,\cdots,\mathbf{f}_{\rho}\}$, if there is no prior information on $\{\mathbf{f}_1,\cdots,\mathbf{f}_{\rho}\}$?  
\end{question}

On the other hand, it may be possible that we can specify the correspondence in each vector residue set $\mathcal{S}_j$, if we know some prior information on $\{\mathbf{f}_1,\cdots,\mathbf{f}_{\rho}\}$. In this case, the problem becomes how to determine each $\mathbf{f}_i$ from its vector remainders $\mathbf{r}_{i,j}$ and can be solved by using MD-CRT separately for each $\mathbf{f}_i$. Then, we can uniquely determine all of them if and only if $\{\mathbf{f}_1,\cdots,\mathbf{f}_{\rho}\} \subset \mathcal{N}(\mathbf{R})$. In this case, the maximal possible dynamic range is achieved. Thus, we have the following question.

\begin{question}\label{q2}
    Under what prior information on $\{\mathbf{f}_1,\cdots,\mathbf{f}_{\rho}\}$, the maximal possible dynamic range can be achieved?
\end{question}

In the following sections, we investigate these two questions and present answers to them.

\section{MD-CRT for Multiple Integer Vectors without Prior Information}\label{s3}

In this section, we focus on Question \ref{q1} and present a uniquely determinable range of MD-CRT for multiple integer vectors. An efficient algorithm is also introduced correspondingly with two detailed examples. 

For simplicity, let $\mathbf{R}_{\mathcal{A}}$ be an lcrm of $\{\mathbf{M}_{j}:j\in \mathcal{A}\}$ for any $\mathcal{A} \subset \{1,2,\cdots,\gamma\}$ and 
\begin{equation}\label{Neta}
    \mathcal{N}_{\eta}=\bigcap\limits_{\mathcal{A} \subset \{1,2,\cdots,\gamma\}, |\mathcal{A}|=\eta} \mathcal{N}(\mathbf{R}_{\mathcal{A}})
\end{equation}
for $1\leq \eta \leq \gamma$, we then have the following result.

\begin{thm}\label{th1}
    For any given $\gamma$ integer matrix moduli $\mathbf{M}_j$, $j=1,2,\cdots,\gamma$, $\rho$ integer vectors $\mathbf{f}_i$, $i=1,2,\cdots,\rho$, can be uniquely determined from their residue set $\mathcal{S}(\mathbf{f}_1,\mathbf{f}_2,\cdots,\mathbf{f}_{\rho})$, if 
    \begin{itemize}
        \item[1)] $\{\mathbf{f}_1,\mathbf{f}_2,\cdots,\mathbf{f}_{\rho}\} \subset \mathcal{N}(\mathbf{M}_j)$\ \text{for some} $j=1,2,\cdots,\gamma$, \quad \text{or} 
        \item[2)] $\{\mathbf{f}_1,\mathbf{f}_2,\cdots,\mathbf{f}_{\rho}\} \subset \mathcal{N}_{\eta}$, \quad $\eta \geq 1$,
    \end{itemize}
    where $\eta$ satisfies $\gamma=\eta\rho + \alpha$ and $0\leq \alpha < \rho$.
\end{thm}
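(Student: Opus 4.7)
My plan is to show, for each of the two conditions, that any two candidate sets $\{\mathbf{f}_1,\ldots,\mathbf{f}_{\rho}\}$ and $\{\mathbf{f}'_1,\ldots,\mathbf{f}'_{\rho}\}$ in the hypothesized range sharing the same residue set $\mathcal{S}$, i.e., satisfying $\mathcal{S}_j(\mathbf{f}_1,\ldots,\mathbf{f}_{\rho})=\mathcal{S}_j(\mathbf{f}'_1,\ldots,\mathbf{f}'_{\rho})$ for every $j$, must coincide as sets. Condition 1) is essentially immediate: if every vector under consideration lies in $\mathcal{N}(\mathbf{M}_{j_0})$ for some fixed $j_0$, then uniqueness of the division representation forces each vector to equal its own remainder modulo $\mathbf{M}_{j_0}$, so $\mathcal{S}_{j_0}$ is literally the set $\{\mathbf{f}_1,\ldots,\mathbf{f}_{\rho}\}$ for one candidate and $\{\mathbf{f}'_1,\ldots,\mathbf{f}'_{\rho}\}$ for the other, and equality of the $\mathcal{S}_{j_0}$'s gives equality of the vector sets.

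For condition 2), the plan is a pigeonhole-plus-MD-CRT argument. Fix $\mathbf{f}_i$; for each $j\in\{1,\ldots,\gamma\}$, the equality of the $\mathcal{S}_j$'s lets me pick some $\phi_j(i)\in\{1,\ldots,\rho\}$ with $\mathbf{f}_i\equiv\mathbf{f}'_{\phi_j(i)}\pmod{\mathbf{M}_j}$. Since $\gamma=\eta\rho+\alpha\geq\eta\rho$, the pigeonhole principle applied to the map $j\mapsto\phi_j(i)$ produces an index $i'$ and a subset $\mathcal{A}\subseteq\{1,\ldots,\gamma\}$ with $|\mathcal{A}|\geq\eta$ such that $\mathbf{f}_i\equiv\mathbf{f}'_{i'}\pmod{\mathbf{M}_j}$ for every $j\in\mathcal{A}$. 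Fixing any size-$\eta$ subset $\mathcal{A}_0\subseteq\mathcal{A}$, the definition (\ref{Neta}) places both $\mathbf{f}_i$ and $\mathbf{f}'_{i'}$ inside $\mathcal{N}(\mathbf{R}_{\mathcal{A}_0})$, while by construction they share their vector remainders modulo every $\mathbf{M}_j$, $j\in\mathcal{A}_0$. Proposition \ref{MD-CRT} applied to $\{\mathbf{M}_j\}_{j\in\mathcal{A}_0}$ with lcrm $\mathbf{R}_{\mathcal{A}_0}$ then forces $\mathbf{f}_i=\mathbf{f}'_{i'}$. Letting $i$ range over $\{1,\ldots,\rho\}$ shows every element of the first set appears in the second; symmetry completes the argument.

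The step I expect to need the most care with is the interplay between the pigeonhole, which gives no control over \emph{which} $\eta$-subset $\mathcal{A}_0$ emerges, and the definition of $\mathcal{N}_{\eta}$ as the intersection over \emph{all} size-$\eta$ subsets in (\ref{Neta}), which supplies exactly the universal quantifier needed to guarantee MD-CRT applicability for whatever $\mathcal{A}_0$ the pigeonhole returns. The arithmetic identity $\gamma=\eta\rho+\alpha$ with $0\leq\alpha<\rho$ is tight for this counting, so weakening the range $\mathcal{N}_{\eta}$ would force $\eta$ to shrink.
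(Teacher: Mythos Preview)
Your proposal is correct and follows essentially the same approach as the paper's proof: condition 1) is dispatched by noting the residues modulo $\mathbf{M}_{j_0}$ are the vectors themselves, and condition 2) is handled by the identical pigeonhole-then-MD-CRT argument (the paper uses the notation $\mathbf{k}_{x_j}$ where you write $\mathbf{f}'_{\phi_j(i)}$, but the logic is the same). Your observation about the interplay between the uncontrolled pigeonhole output and the universal intersection in (\ref{Neta}) is exactly the point, and the paper makes the same use of it.
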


\begin{proof}
    If there exists some $j=1,2,\cdots,\gamma$, such that $\{\mathbf{f}_1,\mathbf{f}_2,\cdots,\mathbf{f}_{\rho}\} \subset \mathcal{N}(\mathbf{M}_j)$, then we have $\mathbf{r}_{i,j}=\mathbf{f}_i$ for all $i=1,2,\cdots,\rho$. So, all $\rho$ vectors in the set $\mathcal{S}_j(\mathbf{f}_1,\mathbf{f}_2,\cdots,\mathbf{f}_{\rho})$ are that we want to determine. 

    Next, we consider the second uniquely determinable range. When $\eta=1$, it becomes
    $$\{\mathbf{f}_1,\mathbf{f}_2,\cdots,\mathbf{f}_{\rho}\} \subset \bigcap\limits_{j} \mathcal{N}(\mathbf{M}_{j}) \subset \mathcal{N}(\mathbf{M}_j),$$
    which is included in the first uniquely determinable range. 

    When $\eta>1$, we prove that $\mathcal{N}_{\eta}$ is a uniquely determinable range for $\mathbf{f}_1,\mathbf{f}_2,\cdots,\mathbf{f}_{\rho}$ by proving that if there are two sets of integer vectors $\{\mathbf{f}_1,\cdots,\mathbf{f}_{\rho}\}$ and $\{\mathbf{k}_1,\cdots,\mathbf{k}_{\rho}\}$ in $\mathcal{N}_{\eta}$ such that $$\mathcal{S}(\mathbf{f}_1,\cdots,\mathbf{f}_{\rho}) = \mathcal{S}(\mathbf{k}_1,\cdots,\mathbf{k}_{\rho}),$$
    then we have $\{\mathbf{f}_1,\cdots,\mathbf{f}_{\rho}\}=\{\mathbf{k}_1,\cdots,\mathbf{k}_{\rho}\}$.

    Assume that $\{\mathbf{f}_1,\cdots,\mathbf{f}_{\rho}\}$ and $\{\mathbf{k}_1,\cdots,\mathbf{k}_{\rho}\}$ are two sets of vectors in $\mathcal{N}_{\eta}$ such that $$\mathcal{S}(\mathbf{f}_1,\cdots,\mathbf{f}_{\rho}) = \mathcal{S}(\mathbf{k}_1,\cdots,\mathbf{k}_{\rho}).$$
    For each given $\mathbf{f}_i$, $1 \leq i \leq \rho$, there exists a vector $\mathbf{k}_{x_j}$ for $1 \leq x_j \leq \rho$ such that $$\mathbf{f}_i\equiv\mathbf{k}_{x_j} \mod \mathbf{M}_j$$ for every $1\leq j \leq \gamma$. Since $\gamma=\eta\rho + \alpha$ and $\eta>1$, i.e., the number of matrix moduli is $\eta$ times greater than that of vectors that need to be determined, there must exist $\eta$ vectors $\mathbf{k}_{x_{j_1}},\cdots,\mathbf{k}_{x_{j_{\eta}}}$ such that $\mathbf{k}_{x_{j_1}}=\cdots=\mathbf{k}_{x_{j_{\eta}}}$, for $1 \leq j_1,\cdots,j_{\eta} \leq \gamma$ and $1 \leq x_{j_1},\cdots,x_{j_{\eta}} \leq \rho$. Without loss of generality, we denote this vector by $\mathbf{k}_{i^{\prime}}$. It also means that $$\mathbf{f}_i\equiv\mathbf{k}_{i^{\prime}} \mod \mathbf{M}_{j_y}$$ for all $y=1,2,\cdots,\eta$.
    Let $\mathcal{A}=\{j_1,\cdots,j_{\eta}\}$ and $\mathbf{R}_{\mathcal{A}}$ be an lcrm of $\mathbf{M}_{j_y}$ for all $y=1,2,\cdots,\eta$. We then have $$\mathbf{f}_i\equiv\mathbf{k}_{i^{\prime}} \mod \mathbf{R}_{\mathcal{A}}.$$ From the condition 2) above, we have $\mathbf{f}_{i},\mathbf{k}_{i^{\prime}} \in \mathcal{N}(\mathbf{R}_{\mathcal{A}})$. As a result, $\mathbf{f}_i=\mathbf{k}_{i^{\prime}}$. It means that for each given $\mathbf{f}_i$ for $1 \leq i \leq \rho$, we have $\mathbf{f}_i \in \{\mathbf{k}_1,\cdots,\mathbf{k}_{\rho}\}$. So, we can get $\{\mathbf{f}_1,\cdots,\mathbf{f}_{\rho}\} \subset \{\mathbf{k}_1,\cdots,\mathbf{k}_{\rho}\}$. Similarly, we can also get $\{\mathbf{k}_1,\cdots,\mathbf{k}_{\rho}\} \subset \{\mathbf{f}_1,\cdots,\mathbf{f}_{\rho}\}$, which means that $\{\mathbf{f}_1,\cdots,\mathbf{f}_{\rho}\} = \{\mathbf{k}_1,\cdots,\mathbf{k}_{\rho}\}$.
\end{proof}

Since there are many choices of $\mathbf{R}_{\mathcal{A}}$ in (\ref{Neta}) for any subset ${\cal A}$ of $\{1,2,\cdots, \gamma\}$, there are many possible $\mathcal{N}_{\eta}$ as well. Theorem \ref{th1} holds for any possible $\mathcal{N}_{\eta}$.
For any possible $\mathcal{N}_{\eta}$, we have $|\mathcal{N}_{\eta}|\leq |\mathcal{N}(\mathbf{R}_{\mathcal{A}})| \leq |\mathcal{N}(\mathbf{R}_{\{1,2,\cdots,\gamma\}})|$, since $\mathbf{R}_{\{1,2,\cdots,\gamma\}}$ is a crm of $\mathbf{R}_{\mathcal{A}}$ and $\mathbf{M}_j$ for $j \in \{1,2,\cdots,\gamma\} \setminus \mathcal{A}$, where $|\mathcal{N}(\mathbf{R}_{\{1,2,\cdots,\gamma\}})|$ is the dynamic range of MD-CRT for a single integer vector. As a result, the dynamic range obtained in Theorem \ref{th1} is, not surprisingly, less than or equal to that of MD-CRT for a single integer vector, and in most cases, is strictly less than that of MD-CRT, which depends on the choice of matrix moduli and the number $\rho$ of integer vectors need to determine.

Similar to MD-CRT, in this problem, $\mathbf{R}_{\mathcal{A}}$ and consequently $\mathcal{N}_{\eta}$ must be selected in advance. 
In the following discussion, we assume this selection has been made.

When $\{\mathbf{f}_1,\mathbf{f}_2,\cdots,\mathbf{f}_{\rho}\} \subset \mathcal{N}(\mathbf{M}_j) \ \text{for some} \ j=1,2,\cdots,\gamma$, we can get these $\rho$ vectors directly from $\mathcal{S}_j(\mathbf{f}_1,\mathbf{f}_2,\cdots,\mathbf{f}_{\rho})$.
When $\{\mathbf{f}_1,\mathbf{f}_2,\cdots,\mathbf{f}_{\rho}\} \subset \mathcal{N}_{\eta}$, for $1 < \eta \leq \gamma$, 
according to Theorem \ref{th1}, if we calculate the residue set $\mathcal{S}(\mathbf{f}_1,\cdots,\mathbf{f}_{\rho})$ for all possible $\{\mathbf{f}_1,\cdots,\mathbf{f}_{\rho}\}$ to form a table beforehand, then we can search the table to determine $\{\mathbf{f}_1,\cdots,\mathbf{f}_{\rho}\}$ after we detect the residue set $\mathcal{S}(\mathbf{f}_1,\mathbf{f}_2,\cdots,\mathbf{f}_{\rho})$. However, when the size of $\mathcal{N}_{\eta}$ is larger, it is not efficient to calculate all possible residue sets. We next present an efficient algorithm to determine $\{\mathbf{f}_1,\mathbf{f}_2,\cdots,\mathbf{f}_{\rho}\}$ from the set $\mathcal{S}(\mathbf{f}_1,\mathbf{f}_2,\cdots,\mathbf{f}_{\rho})$. To do so, we first present a lemma.

\begin{lemma}\label{lm:anotherf}
    If $\{\mathbf{f}_1,\cdots,\mathbf{f}_{\rho}\} \subset \mathcal{N}_{\eta}$, we can not find another integer vector $\mathbf{f} \in \mathcal{N}_{\eta} \setminus \{\mathbf{f}_1,\cdots,\mathbf{f}_{\rho}\}$ such that $(\mathbf{r}_{1},\cdots,\mathbf{r}_{\gamma}) \in \mathcal{S}(\mathbf{f}_1,\cdots,\mathbf{f}_{\rho})$, where $\mathbf{r}_j$ is the vector remainder of $\mathbf{f}$ modulo $\mathbf{M}_j$ for $1 \leq j\leq \gamma$.
\end{lemma}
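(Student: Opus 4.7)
The plan is to proceed by contradiction using a pigeonhole argument that essentially mirrors the one used for Theorem \ref{th1}. Suppose, for the sake of contradiction, that there exists an integer vector $\mathbf{f}\in\mathcal{N}_{\eta}\setminus\{\mathbf{f}_1,\dots,\mathbf{f}_{\rho}\}$ whose tuple of vector remainders $(\mathbf{r}_1,\dots,\mathbf{r}_{\gamma})$ modulo $\mathbf{M}_1,\dots,\mathbf{M}_{\gamma}$ lies in the residue set $\mathcal{S}(\mathbf{f}_1,\dots,\mathbf{f}_{\rho})$. By definition of $\mathcal{S}$, for every $j=1,2,\dots,\gamma$ there is some index $i_j\in\{1,\dots,\rho\}$ such that $\mathbf{r}_j=\mathbf{r}_{i_j,j}$, i.e., $\mathbf{f}\equiv \mathbf{f}_{i_j}\bmod \mathbf{M}_j$.

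Now I would apply the pigeonhole principle to the function $j\mapsto i_j$ defined on $\{1,\dots,\gamma\}$ with range $\{1,\dots,\rho\}$. Because $\gamma=\eta\rho+\alpha\geq \eta\rho$, at least one index $i^\ast\in\{1,\dots,\rho\}$ is hit by at least $\eta$ different values of $j$; collect them into a subset $\mathcal{A}\subset\{1,\dots,\gamma\}$ of size exactly $\eta$. Then $\mathbf{f}\equiv \mathbf{f}_{i^\ast}\bmod \mathbf{M}_j$ for every $j\in\mathcal{A}$, and consequently $\mathbf{f}-\mathbf{f}_{i^\ast}\in LAT(\mathbf{M}_j)$ for each $j\in\mathcal{A}$. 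Using Proposition \ref{caplattice}, this intersection of lattices equals $LAT(\mathbf{R}_{\mathcal{A}})$, so $\mathbf{f}\equiv \mathbf{f}_{i^\ast}\bmod \mathbf{R}_{\mathcal{A}}$.

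Finally, I would invoke the hypothesis $\{\mathbf{f}_1,\dots,\mathbf{f}_{\rho}\}\subset\mathcal{N}_{\eta}$ together with the assumption $\mathbf{f}\in\mathcal{N}_{\eta}$, which by the definition in (\ref{Neta}) gives $\mathbf{f},\mathbf{f}_{i^\ast}\in\mathcal{N}(\mathbf{R}_{\mathcal{A}})$ since $|\mathcal{A}|=\eta$. The uniqueness of the vector remainder in $\mathcal{N}(\mathbf{R}_{\mathcal{A}})$ forces $\mathbf{f}=\mathbf{f}_{i^\ast}$, contradicting $\mathbf{f}\notin\{\mathbf{f}_1,\dots,\mathbf{f}_{\rho}\}$.

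The argument is almost a direct replay of the uniqueness step in Theorem \ref{th1}; the only subtlety is making sure the pigeonhole is applied correctly with the right lower bound $\eta$ and that the chosen $\mathcal{A}$ has size exactly $\eta$ so that $\mathcal{N}(\mathbf{R}_{\mathcal{A}})$ appears in the intersection defining $\mathcal{N}_{\eta}$. I do not expect a genuine obstacle here—this lemma is essentially asserting that the uniqueness proved in Theorem \ref{th1} is not only among $\rho$-tuples of candidate vectors but also against any single stray vector in $\mathcal{N}_{\eta}$, and the same pigeonhole-plus-lcrm mechanism handles both cases.
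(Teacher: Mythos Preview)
Your proposal is correct and follows essentially the same approach as the paper: both use the pigeonhole principle on the map $j\mapsto i_j$ to find a subset $\mathcal{A}$ of size $\eta$ on which $\mathbf{f}$ and some $\mathbf{f}_{i^\ast}$ agree modulo every $\mathbf{M}_j$, then invoke uniqueness in $\mathcal{N}(\mathbf{R}_{\mathcal{A}})\supset\mathcal{N}_{\eta}$ to force $\mathbf{f}=\mathbf{f}_{i^\ast}$. Your write-up is in fact slightly more streamlined, since the paper first separates out the trivial case where all $\gamma$ remainders match a single $\mathbf{f}_i$ before running the pigeonhole argument, whereas you handle everything at once.
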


\begin{proof}
    Assume $\mathbf{f} \in \mathcal{N}_{\eta}$ and its vector remainders $(\mathbf{r}_1,\cdots,\mathbf{r}_{\gamma}) \in \mathcal{S}(\mathbf{f}_1,\cdots,\mathbf{f}_{\rho})$. If there exists some $i$ for $1\leq i \leq \rho$ such that $$(\mathbf{r}_1,\cdots,\mathbf{r}_{\gamma})=(\mathbf{r}_{i,1},\cdots,\mathbf{r}_{i,\gamma}),$$
    we then have $\mathbf{f}=\mathbf{f}_i$. Otherwise, there must exist $\eta$ vectors $\mathbf{r}_{i_1},\cdots,\mathbf{r}_{i_{\eta}}$, for $1 \leq i_1,\cdots,i_{\eta} \leq \gamma$, such that these $\eta$ vectors are vector remainders of the same vector $\mathbf{f}_i$ modulo $\mathbf{M}_{i_1},\cdots,\mathbf{M}_{i_{\eta}}$ for some $i=1,2,\cdots,\rho$. Since $\mathbf{f}_i \in \mathcal{N}_{\eta} \subset \mathcal{N}(\mathbf{R}_{\{i_1,\cdots,i_{\eta}\}})$, $\mathbf{f}_i$ is the only vector in $\mathcal{N}_{\eta}$ such that $\mathbf{r}_{i_1},\cdots,\mathbf{r}_{i_{\eta}}$ are its vector reminders modulo $\mathbf{M}_{i_1},\cdots,\mathbf{M}_{i_{\eta}}$ according to the MD-CRT in Prop. \ref{MD-CRT}. Since all $\mathbf{r}_j$, for $1 \leq j \leq \gamma$, are vector remainders of the same vector in $\mathcal{N}_{\eta}$, we have $(\mathbf{r}_1,\cdots,\mathbf{r}_{\gamma})=(\mathbf{r}_{i,1},\cdots,\mathbf{r}_{i,\gamma})$, i.e., $\mathbf{f}=\mathbf{f}_i$. Then, we complete this proof.
\end{proof}

Note that Theorem \ref{th1} and Lemma \ref{lm:anotherf} are similar to that in the one dimensional case \cite{xia99} and \cite{xia00}, respectively. Next, we present a detailed algorithm to reconstruct $\{\mathbf{f}_1,\cdots,\mathbf{f}_{\rho}\}$ according to $\mathcal{S}(\mathbf{f}_1,\cdots,\mathbf{f}_{\rho})$ when they are all in $\mathcal{N}_{\eta}$. The above lemma will
be used in the following algorithm development.

\begin{description}[leftmargin=1.26cm]
  \item[Step 1:]
  Arbitrarily choose $(\mathbf{r}_{{i_1},{1}},\cdots,\mathbf{r}_{{i_{\gamma}},{\gamma}})$ from $\mathcal{S}(\mathbf{f}_1,\cdots,\mathbf{f}_{\rho})$.
\end{description}

Since $\gamma=\eta\rho + \alpha$, i.e., the number of moduli is $\eta$ times greater than the number of vectors needed to be reconstructed, there must exist $1\leq j_1,j_2,\cdots,j_{\eta} \leq \gamma$ such that $\mathbf{r}_{i_{j_1},j_1}, \mathbf{r}_{i_{j_2},j_2}, \cdots, \mathbf{r}_{i_{j_{\eta}},j_{\eta}}$ chosen in Step 1 are vector remainders of a single vector, i.e., one of $\mathbf{f}_1,\cdots,\mathbf{f}_{\rho}$, modulo different matrix moduli $\mathbf{M}_{j_1},\cdots,\mathbf{M}_{j_{\eta}}$. Then, this vector can be obtained by using MD-CRT to solve the following congruence system:
\begin{equation}\label{eta}
    \begin{aligned}
      \mathbf{f} &\equiv \mathbf{r}_{i_{j_{1}},{j_1}} \mod{\mathbf{M}_{j_1}}, \\
      \mathbf{f} &\equiv \mathbf{r}_{i_{j_{2}},{j_2}} \mod{\mathbf{M}_{j_2}}, \\
       &\vdotswithin{\equiv}\\
      \mathbf{f} &\equiv \mathbf{r}_{i_{j_{\eta}},{j_{\eta}}} \mod{\mathbf{M}_{j_{\eta}}}.
    \end{aligned}
\end{equation}
However, if we randomly choose any $\eta$ vectors in $\mathbf{r}_{{i_1},{1}},\cdots,\mathbf{r}_{{i_{\gamma}},{\gamma}}$ and solve the congruence system (\ref{eta}), the resulting vector, which is in $\mathcal{N}(\mathbf{R}_{\{i_1,\cdots,i_{\eta}\}})$, may not be in $\mathcal{N}_{\eta}$, since $\mathcal{N}_{\eta}\subset \mathcal{N}(\mathbf{R}_{\{i_1,\cdots,i_{\eta}\}})$.
Besides, even if it is in $\mathcal{N}_{\eta}$, it may not be one of $\mathbf{f}_1,\cdots,\mathbf{f}_{\rho}$. According to Lemma \ref{lm:anotherf}, we can easily check whether it is indeed one of $\mathbf{f}_1,\cdots,\mathbf{f}_{\rho}$ by checking whether its vector remainders are in $\mathcal{S}(\mathbf{f}_1,\cdots,\mathbf{f}_{\rho})$.  

\begin{description}[leftmargin=1.26cm]
  \item[Step 2:]
  Choose any $\eta$ vectors $\mathbf{r}_{i_{j_1},j_1}, \mathbf{r}_{i_{j_2},j_2}, \cdots, \mathbf{r}_{i_{j_{\eta}},j_{\eta}}$ from the $\gamma$ chosen vectors $\mathbf{r}_{{i_1},{1}},\cdots,\mathbf{r}_{{i_{\gamma}},{\gamma}}$ in $\mathcal{S}(\mathbf{f}_1,\cdots,\mathbf{f}_{\rho})$ in Step 1 and solve the corresponding congruence system (\ref{eta}) to obtain a reconstructed vector. 
  If this newly obtained vector isn't in $\mathcal{N}_{\eta}$, we choose another $\eta$ vectors and repeat the same operations. If
  it is in $\mathcal{N}_{\eta}$, we then calculate the vector remainders $\mathbf{r}_{1},\cdots,\mathbf{r}_{\gamma}$ of this vector modulo all matrix moduli $\mathbf{M}_{j}$ for $j=1,2,\cdots,\gamma$. If $(\mathbf{r}_{1},\cdots,\mathbf{r}_{\gamma}) \in \mathcal{S}(\mathbf{f}_1,\cdots,\mathbf{f}_{\rho})$, this vector is one of $\mathbf{f}_1,\cdots,\mathbf{f}_{\rho}$, without loss of generality, denoted by $\mathbf{f}_{\rho}$. Otherwise, continue choosing another $\eta$ vectors until we find $\mathbf{f}_{\rho}$.
\end{description}

\begin{description}[leftmargin=1.26cm]
  \item[Step 3:]
  Remove the vector remainders $(\mathbf{r}_{1},\mathbf{r}_{2},\cdots,\mathbf{r}_{\gamma})$ from $\mathcal{S}(\mathbf{f}_1,\cdots,\mathbf{f}_{\rho})$ as follows:
\begin{equation}\label{delete}
\begin{aligned}
\mathcal{S}_j(\mathbf{f}_1,\cdots,\mathbf{f}_{\rho-1}) =
\begin{cases} 
\mathcal{S}_j(\mathbf{f}_1,\cdots,\mathbf{f}_{\rho}) \setminus \{\mathbf{r}_{j}\}, \\
\quad \text{if } |\mathcal{S}_j(\mathbf{f}_1,\cdots,\mathbf{f}_{\rho})| = \rho, \\
\mathcal{S}_j(\mathbf{f}_1,\cdots,\mathbf{f}_{\rho}), \quad \text{otherwise},
\end{cases}
\end{aligned}
\end{equation}
and then obtain
\begin{multline}\label{residue}
  \mathcal{S}(\mathbf{f}_1, \cdots, \mathbf{f}_{\rho-1}) = 
  S_1(\mathbf{f}_1, \cdots, \mathbf{f}_{\rho-1}) \times \cdots \\
  \times S_\gamma(\mathbf{f}_1, \cdots, \mathbf{f}_{\rho-1})
  \end{multline}
For each $j$ such that $|\mathcal{S}_j(\mathbf{f}_1,\cdots,\mathbf{f}_{\rho})|<\rho$, let $\mathcal{A}_j$ denote the subset of $\mathcal{S}_j(\mathbf{f}_1,\cdots,\mathbf{f}_{\rho})$ consisting of the vector remainders corresponding to already determined vectors, which cannot be removed from $\mathcal{S}_j(\mathbf{f}_1,\cdots,\mathbf{f}_{\rho})$ because the number of vectors in the residue set is less than $\rho$. Let $\mathcal{B}_l(j)$ denote the set of all the vector remainders that have been removed from $\mathcal{S}_j(\mathbf{f}_1,\cdots,\mathbf{f}_{\rho})$ during the past $l$ successful rounds\footnote{Although the above is for the first unknown vector $\mathbf{f}_{\rho}$ to determine, here applies to any intermediate step after $l$ vectors $\mathbf{f}_{\rho},\cdots,\mathbf{f}_{\rho-l+1}$ have been successfully determined and one could think of $\rho-l \gets \rho$.  It is similar to the set $\mathcal{A}_j$ defined above.} for $1 \leq l \leq \rho$. Then, update $\mathcal{A}_j$ and $\mathcal{B}_1(j)$.
\end{description}
 
From the above Steps 1-3, we have determined the vector $\mathbf{f}_{\rho}$ and obtained a new residue set $\mathcal{S}(\mathbf{f}_1,\cdots,$ $\mathbf{f}_{\rho-1})$. Since the number $\gamma$  of moduli is still $\eta$ times greater than the number of remaining vectors needed to be reconstructed, i.e., $\rho-1$, we can repeat Steps 1-3 over this new residue set $\mathcal{S}(\mathbf{f}_1,\cdots,\mathbf{f}_{\rho-1})$ as discussed above. Moreover, since the new residue set $\mathcal{S}(\mathbf{f}_1,\cdots,\mathbf{f}_{\rho-1})$ is the true residue set of 
$\mathbf{f}_1,\cdots,\mathbf{f}_{\rho-1}$, we can correctly determine  $\mathbf{f}_{\rho-1}$ and update $\mathcal{S}(\mathbf{f}_1,\cdots,\mathbf{f}_{\rho-2})$ by applying equations (\ref{delete}) and (\ref{residue}) with $\rho$ replaced by $\rho-1$. 

\begin{description}[leftmargin=1.26cm]
  \item[Step 4:]
  Repeat Steps 1-3 above by substituting $\mathcal{S}(\mathbf{f}_1,\cdots,\mathbf{f}_{\rho})$ with $\mathcal{S}(\mathbf{f}_1,\cdots,\mathbf{f}_{\rho-1})$ to determine $\mathbf{f}_{\rho-1}$ and update $\mathcal{S}(\mathbf{f}_1,\cdots,\mathbf{f}_{\rho-2})$, $\mathcal{A}_j$ and $\mathcal{B}_2(j)$.
\end{description}

It seems that the problem now is simply the replacement of the original problem by replacing $\rho$ with $\rho-2$, $\rho-3$, and so on, until $1$ in order to sequentially determine the vectors $\mathbf{f}_{\rho-2},\mathbf{f}_{\rho-3}, \cdots,\mathbf{f}_{1}$.
Unfortunately, the newly obtained residue set $\mathcal{S}(\mathbf{f}_1,\cdots,\mathbf{f}_{\rho-2})$ may not be the true residue set of $\mathbf{f}_1,\cdots,\mathbf{f}_{\rho-2}$.
It is because the newly obtained residue set $\mathcal{S}(\mathbf{f}_1,\cdots,\mathbf{f}_{\rho-2})$ from (\ref{delete})-(\ref{residue}) may include vector remainders in $\mathcal{A}_j$ defined in Step 3 that may not be any vector remainders of the unknown vectors $\mathbf{f}_1,\cdots,\mathbf{f}_{\rho-2}$ to determine, and has to be removed from the newly defined residue set $\mathcal{S}(\mathbf{f}_1,\cdots,\mathbf{f}_{\rho-2})$ in next steps. More details are given below.
For the convenience of the subsequent analysis, we refer each successful repetition of Steps 1-3, i.e., each time a vector is successfully determined, as a successful round. For example, Step 4 corresponds to the second successful round.

If $|\mathcal{S}_j(\mathbf{f}_1,\cdots,\mathbf{f}_{\rho})|=\rho$ for all $j=1,2,\cdots,\gamma$, i.e., the vector remainders of $\mathbf{f}_1,\cdots,\mathbf{f}_{\rho}$ modulo every single matrix are distinct, then we can always get the correct residue set $\mathcal{S}(\mathbf{f}_1,\cdots,\mathbf{f}_{i})$ that is the true residue set of $\mathbf{f}_1,\cdots,\mathbf{f}_{i}$, for all $i$ with $1 \leq i \leq \rho-1$.
Therefore, no error will occur during the repetition of Steps 1-3 in any repetition, i.e., we can always find a vector satisfying the required conditions at each repetition, which is one of the vectors that we want to determine. This case is shown in Example \ref{ex:1} below.

However, if there exists some $1 \leq j \leq \gamma$ such that  $|\mathcal{S}_j(\mathbf{f}_1,\cdots,\mathbf{f}_{\rho})|=\alpha<\rho$, it is possible that during the $(\rho-\alpha+1)$-th repetition of Steps 1-3, a vector remainder that is still required for determining some undetermined vectors is mistakenly removed according to equation (\ref{delete}). 
This is because, during the first $(\rho-\alpha)$ repetitions of Steps 1-3, there is no need to remove vector remainders from $\mathcal{S}_j(\mathbf{f}_1,\cdots,\mathbf{f}_{\rho})$. However, in the $(\rho-\alpha+1)$-th repetition, we need to remove the vector remainder $\mathbf{r}_{\alpha,j}$ of the newly reconstructed vector $\mathbf{f}_{\alpha}$ modulo $\mathbf{M}_j$ from the set $\mathcal{S}_j(\mathbf{f}_1,\cdots,\mathbf{f}_{\alpha})$. This removed vector, however, may be the same vector remainder of two or more different vectors $\mathbf{f}_i$, and thus may actually be the vector remainder of one of the yet-to-be-determined vectors $\mathbf{f}_1,\cdots,\mathbf{f}_{\alpha-1}$. In such a case, the vector that should have been removed is one from the set $\{\mathbf{r}_{\rho,j},\mathbf{r}_{\rho-1,j},\cdots,\mathbf{r}_{\alpha+1,j}\}$, i.e., $\mathcal{A}_j$ defined in Step 3 earlier, which contains the remainders of all the previously reconstructed vectors  $\mathbf{f}_{\rho},\mathbf{f}_{\rho-1},\cdots,\mathbf{f}_{\alpha+1}$  modulo $\mathbf{M}_{j}$. This case is shown in Example \ref{ex:3.2} below.

According to Lemma \ref{lm:anotherf}, if a vector remainder is mistakenly removed in the above sense, then in one of the subsequent repetitions of Steps 1-3, it will become impossible to find a vector satisfying the requirements of Step 2.

Assume that we fail to get a vector in the repetition of Steps 1-3 after $l$ successful rounds, where $l > \rho-\alpha$. This indicates that the earlier residue set $\mathcal{S}_{j}(\mathbf{f}_1,\cdots,\mathbf{f}_{\rho-l})$ is incorrect due to the removal of a necessary vector remainder. To address this, we construct a new residue set by removing one vector from the intersection $\mathcal{S}_{j}(\mathbf{f}_1,\cdots,\mathbf{f}_{\rho-l}) \cap \mathcal{A}_j$ and adding back one of the vector remainders in set $\mathcal{B}_{l}(j)$, which is the set of all the vectors that are removed from $\mathcal{S}(\mathbf{f}_1,\cdots,\mathbf{f}_{\rho})$ in the last $l$ successful rounds. We then repeat Steps 1-3 using this updated residue set.
If the replacement is correct, i.e., the residue set is the correct residue set of the remaining vectors $\mathbf{f}_1,\cdots,\mathbf{f}_{\rho-l}$ to determine,
the subsequent repetitions of Steps 1-3 will proceed without error, and we can successfully reconstruct all vectors after $\rho-l$ repetitions. If the replacement is incorrect, an error will arise again in a later repetition, without loss of generality, say, after the $l^{\prime}$-th successful round. We then update $\mathcal{S}_{j}(\mathbf{f}_1,\cdots,\mathbf{f}_{\rho-l^{\prime}})$ by removing one vector from the intersection $\mathcal{S}_{j}(\mathbf{f}_1,\cdots,\mathbf{f}_{\rho-l^{\prime}}) \cap \mathcal{A}_j$ and adding back one of the vector remainders in set $\mathcal{B}_{l^{\prime}}(j)$ and repeat Steps 1-3.
Since the numbers of vectors in $\mathcal{A}_j$ and $\mathcal{B}_l(j)$ for $l > \rho-\alpha$, are finite, we are guaranteed to eventually find the correct residue set. As a result, all vectors $\mathbf{f}_1,\cdots,\mathbf{f}_{\rho}$ can be successfully determined.

For $5\leq k \leq \rho+2$, we have following steps:

\begin{description}[leftmargin=1.26cm]
  \item[Step k:]
  Repeat Steps 1-2. If a vector satisfying the required conditions is found, proceed to Step 3. Otherwise, update the residue set by removing one vector from the intersection $$\mathcal{S}_{j}(\mathbf{f}_1,\cdots,\mathbf{f}_{\rho-k+3}) \cap \mathcal{A}_j$$ and adding back one of the vector remainders in set $\mathcal{B}_{k-3}(j)$. Then repeat Steps 1-2 using the updated residue set. If it still fails, try a different candidate for removal and repeat the process until a valid vector is found. Once such a vector is obtained, proceed to Step 3.
\end{description}

By following the above procedure, we are guaranteed to successfully determine all $\rho$ vectors $\mathbf{f}_1,\cdots,\mathbf{f}_{\rho}$. If there are more than one vector residue sets with their cardinality less than $\rho$, we apply the above correction operations progressively. We first attempt to correct each vector residue set individually. If all individual attempts fail, we then try all possible combinations of two vector residue sets, followed by combinations of three, and so on.

Let's briefly analyze the complexity of this algorithm when $|\mathcal{S}_j(\mathbf{f}_1,\cdots,\mathbf{f}_{\rho})|=\rho$ for all $j=1,2,\cdots,\gamma$. 
For each $1 \leq l \leq \rho$, in the $l$-th successful round, the chosen vector remainders $(\mathbf{r}_{{i_1},{1}},\cdots,\mathbf{r}_{{i_{\gamma}},{\gamma}})$ can be partitioned into $\rho-l+1$ disjoint groups, each corresponding to a single vector. If $\alpha_l$ groups contain $\eta_l+1$ vector remainders and the remaining $(\rho-l+1-\alpha_l)$ groups contain $\eta_l$ vector remainders, where $\eta_l$ and $\alpha_l$ satisfy $\gamma=\eta_l(\rho-l+1)+\alpha_l$ and $0\leq \alpha_l < \rho-l+1$, we need to try the maximal possible combinations of $\eta$ vector remainders in order to find a valid vector, leading to the worst-case number of MD-CRT to solve in this round. Therefore, in the $l$-th round, we need to use MD-CRT at most 
$$\binom{\gamma}{\eta}-\alpha_l\binom{\eta_l+1}{\eta}-(\rho-l+1-\alpha_l)\binom{\eta_l}{\eta}+1$$
times.
Combining all $\rho$ rounds, totally we need to use MD-CRT at most 
\begin{equation}\label{complexity}
    \sum_{l=1}^{\rho}\left[\binom{\gamma}{\eta}-\alpha_l\binom{\eta_l+1}{\eta}-(\rho-l+1-\alpha_l)\binom{\eta_l}{\eta}+1\right]
\end{equation}
times. Note that $\eta_1=\eta$. When there exist $j$ such that $|\mathcal{S}_j(\mathbf{f}_1,\cdots,\mathbf{f}_{\rho})| \neq \rho$, i.e., it has repeated remainders, the above analysis shows that the complexity is in the same order as (\ref{complexity}) and furthermore, the probability of having repeated remainders is small in practical applications.

A detailed algorithm is summarized in Algorithm \ref{alg:reconstruction}, which, for simplicity, only focuses on the case that at most one vector residue set with its cardinality less than $\rho$. Next, we present two examples to show how this algorithm works.

\begin{algorithm}[htbp]
\caption{Reconstruction of $\rho$ Integer Vectors without Prior Information}
\label{alg:reconstruction}
\begin{algorithmic}[1]
\Require Residue set $\mathcal{S}=\mathcal{S}_1 \times \cdots \times \mathcal{S}_{\gamma}$ with $\gamma$ matrix moduli $\mathbf{M}_1,\cdots,\mathbf{M}_{\gamma}$ and $\mathcal{N}_{\eta}$
\Ensure $\rho$ integer vectors $\mathbf{f}_1, \cdots, \mathbf{f}_\rho$

\State Initialize $k \gets 1$, successful\_rounds $\gets 0$
\While{$k \leq \rho$}
    \State \textbf{Step 1}: Arbitrarily choose $\gamma$ vector remainders $(\mathbf{r}_{i_1,1}, \cdots, \mathbf{r}_{i_\gamma,\gamma})$ from current residue set $\mathcal{S}$
    \State \textbf{Step 2}: \For{each combination of $\eta$ remainders from the chosen $\gamma$ vector remainders}
    \State Solve congruence system (\ref{eta}) to get $\mathbf{f}$ 
    \If{$\mathbf{f} \in \mathcal{N}_{\eta}$ and its remainders $(\mathbf{r}_1, \cdots, \mathbf{r}_\gamma) \in \mathcal{S}$ }
        \State \textbf{Step 3}: Set $\mathbf{f}_{\rho - k + 1} \gets \mathbf{f}$
        \State Remove remainders $(\mathbf{r}_1, \cdots, \mathbf{r}_\gamma)$ as follows:
        \State \parbox[t]{.7\linewidth}{
            $\mathcal{S}_j' =
              \begin{cases}
             \mathcal{S}_j \setminus \{\mathbf{r}_j\}, & \text{if } 
             |\mathcal{S}_j| = \rho - k + 1, \\
            \mathcal{S}_j, & \text{otherwise}.
            \end{cases}$
         }
        \State Update residue set: $\mathcal{S} \gets \mathcal{S}_1' \times \cdots \times \mathcal{S}_\gamma'$
        \State Update $\mathcal{A}_j$ and $\mathcal{B}_{\text{successful\_rounds}+1}(j)$
        \State successful\_rounds $\gets$ successful\_rounds $+ 1$
        \State $k \gets k + 1$
        \State break
        \EndIf
    \EndFor
\If{no valid vector found}
    \State Remove one vector from $\mathcal{S}_j \cap \mathcal{A}_j$
    \State Add back one from $\mathcal{B}_{k-1}(j)$
    \State Repeat Steps 1-2 with updated residue set
\EndIf
\EndWhile
\State \Return $\mathbf{f}_1, \ldots, \mathbf{f}_\rho$
\end{algorithmic}
\end{algorithm}

\begin{example}\label{ex:1}
    Let $\rho=2$ and $\gamma=4$. The four matrix moduli are as follows:
    \begin{equation}\notag
\begin{aligned} 
\mathbf{M}_{1} &=
    \begin{pmatrix}
        3 & 0 \\
        1 & 3 \\
    \end{pmatrix},
    \mathbf{M}_{2} =
    \begin{pmatrix}
        3 & 1 \\
        0 & 3 \\
    \end{pmatrix},\\ 
    \mathbf{M}_{3} &=
    \begin{pmatrix}
        4 & 0 \\
        1 & 4 \\
    \end{pmatrix},
    \mathbf{M}_{4} =
    \begin{pmatrix}
        4 & 1 \\
        0 & 4 \\
    \end{pmatrix}.
\end{aligned}
\end{equation}
Since $\eta=2$, we calculate all $6$ lcrms of any two matrices of $\mathbf{M}_1,\mathbf{M}_2,\mathbf{M}_3,\mathbf{M}_4$ following \cite{MD2,remainder} and get
\begin{equation}\notag
\begin{aligned} 
\mathbf{R}_{\{1,2\}} &=
    \begin{pmatrix}
        9 & 0 \\
        0 & 9 \\
    \end{pmatrix},
    \mathbf{R}_{\{1,3\}} =
    \begin{pmatrix}
        12 & 0 \\
        -5 & 12 \\
    \end{pmatrix},\\ 
    \mathbf{R}_{\{1,4\}} &=
    \begin{pmatrix}
        3 & 0 \\
        -20 & 48 \\
    \end{pmatrix},
    \mathbf{R}_{\{2,3\}} =
    \begin{pmatrix}
        4 & 0 \\
        -15 & 36 \\
    \end{pmatrix},\\
    \mathbf{R}_{\{2,4\}} &=
    \begin{pmatrix}
        12 & -5 \\
        0 & 12 \\
    \end{pmatrix},
    \mathbf{R}_{\{3,4\}} =
    \begin{pmatrix}
        16 & 0 \\
        0 & 16 \\
    \end{pmatrix}.
\end{aligned}
\end{equation}
\begin{figure}[htbp]
    \centering
    \includegraphics[width=\columnwidth]{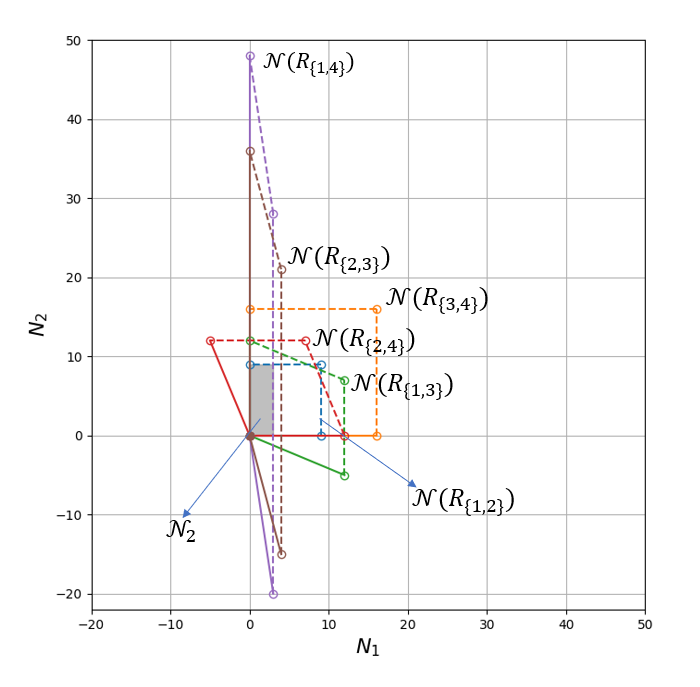}  
    \caption{FPDs of the six lcrms in Example \ref{ex:1}}
    \label{fig:ex1}  
\end{figure}
The FPDs of all these $6$ matrices are shown in Fig. \ref{fig:ex1}, where the shaded area represents their intersection, i.e., $\mathcal{N}_{2}$. It is not hard to see that $\mathcal{N}_{2}=\mathcal{N}(\begin{pmatrix}
    3 &0\\
    0 &9\\
\end{pmatrix})$. Assume the two vectors are $\mathbf{f}_1=[2,4]^{\top}$ and $\mathbf{f}_{2}=[1,7]^{\top}$. We then calculate the residue sets 
\begin{equation*}
\begin{aligned}
    \mathcal{S}_1(\mathbf{f}_1,\mathbf{f}_2) &= \{[2,1]^{\top},[1,1]^{\top}\},\\
    \mathcal{S}_2(\mathbf{f}_1,\mathbf{f}_2) &= \{[1,1]^{\top},[2,1]^{\top}\},\\
    \mathcal{S}_3(\mathbf{f}_1,\mathbf{f}_2) &= \{[2,4]^{\top},[1,3]^{\top}\},\\
    \mathcal{S}_4(\mathbf{f}_1,\mathbf{f}_2) &= \{[1,0]^{\top},[4,3]^{\top}\}.\\
\end{aligned}    
\end{equation*}
We choose $([2,1]^{\top},[2,1]^{\top},[1,3]^{\top},[1,0]^{\top})$ from $\mathcal{S}(\mathbf{f}_1,\mathbf{f}_2)$. We first choose $\mathbf{r}_{1}=[2,1]^{\top}$ and $\mathbf{r}_{2}=[2,1]^{\top}$ and solve 
\begin{equation*}
    \begin{aligned}
      \mathbf{f} &\equiv \mathbf{r}_{1} \mod{\mathbf{M}_{1}}, \\
      \mathbf{f} &\equiv \mathbf{r}_{2} \mod{\mathbf{M}_{2}}, 
    \end{aligned}
\end{equation*}
to get $\mathbf{f}=[2,1]^{\top} \in \mathcal{N}_{2}$. But its vector remainders $([2,1]^{\top},[2,1]^{\top},[2,1]^{\top},[2,1]^{\top}) \notin \mathcal{S}(\mathbf{f}_1,\mathbf{f}_2)$. So we continue choose $\mathbf{r}_{2}=[2,1]^{\top}$ and $\mathbf{r}_{3}=[1,3]^{\top}$ and then solve a similar congruence systems to get $\mathbf{f}=[1,7]^{\top}\in \mathcal{N}_{2}$. We can check that its vector remainders $([1,1]^{\top},[2,1]^{\top},[1,3]^{\top},[4,3]^{\top}) \in \mathcal{S}(\mathbf{f}_1,\mathbf{f}_2)$, which means $[1,7]^{\top}$ is what we want and denoted by $\mathbf{f}_2$. Then, we delete its vector remainders from $\mathcal{S}(\mathbf{f}_1,\mathbf{f}_2)$ and get
$\mathcal{S}(\mathbf{f}_1)=\{[2,1]^{\top}\} \times \{[1,1]^{\top}\} \times \{[2,4]^{\top}\} \times \{[1,0]^{\top}\}$. And we can use MD-CRT to get $\mathbf{f}_1=[2,4]^{\top}$.
\end{example}

In this example, one can see that the maximal dynamic range of any  matrix modulus $\mathbf{M}_j$ alone is 16, i.e., if only one of these matrix moduli is used, any 2 of 16 possible integer vectors can be uniquely determined. However, if 4 matrix moduli are used with our proposed algorithm, the dynamic range is at least 27. In other words, any 2 of 27 possible integer vectors can be uniquely determined.

\begin{example}\label{ex:3.2}
    Let $\rho=3$ and $\gamma=6$. $\mathbf{M}_1,\mathbf{M}_2,\mathbf{M}_3,\mathbf{M}_4$ are the same with Example \ref{ex:1} and
    \begin{equation}\notag
\begin{aligned} 
    \mathbf{M}_{5} &=
    \begin{pmatrix}
        5 & 0 \\
        1 & 5 \\
    \end{pmatrix},
    \mathbf{M}_{6} =
    \begin{pmatrix}
        5 & 1 \\
        0 & 5 \\
    \end{pmatrix}.
\end{aligned}
\end{equation}
Since $\eta=2$, we need to calculate all $15$ lcrms of any two matrices of $\mathbf{M}_1,\mathbf{M}_2,\mathbf{M}_3,\mathbf{M}_4,\mathbf{M}_5,\mathbf{M}_6$ following \cite{MD2,remainder}. If we choose the same $\mathbf{R}_{\{1,2\}}$, $\mathbf{R}_{\{1,3\}}$, $\mathbf{R}_{\{1,4\}}$, $\mathbf{R}_{\{2,3\}}$, $\mathbf{R}_{\{2,4\}}$, $\mathbf{R}_{\{3,4\}}$ with Example \ref{ex:1} and appropriately select the other nine lcrms, we can get the same $\mathcal{N}_{2}$ with Example \ref{ex:1}, i.e.,  
$\mathcal{N}_{2}=\mathcal{N}(\begin{pmatrix}
    3 &0\\
    0 &9\\
\end{pmatrix})$. Assume the three vectors are $\mathbf{f}_1=[2,6]^{\top}$, $\mathbf{f}_2=[1,8]^{\top}$ and $\mathbf{f}_{3}=[0,3]^{\top}$. We then calculate the residue sets 
\begin{equation*}
\begin{aligned}
    \mathcal{S}_1(\mathbf{f}_1,\mathbf{f}_2,\mathbf{f}_3) &= \{[2,3]^{\top},[1,2]^{\top}, [0,0]^{\top}\},\\
    \mathcal{S}_2(\mathbf{f}_1,\mathbf{f}_2,\mathbf{f}_3) &= \{[0,0]^{\top},[2,2]^{\top}, [2,0]^{\top}\},\\
    \mathcal{S}_3(\mathbf{f}_1,\mathbf{f}_2,\mathbf{f}_3) &= \{[2,2]^{\top},[1,4]^{\top}, [0,3]^{\top}\},\\
    \mathcal{S}_4(\mathbf{f}_1,\mathbf{f}_2,\mathbf{f}_3) &= \{[1,2]^{\top},[3,0]^{\top}, [4,3]^{\top}\},\\
    \mathcal{S}_5(\mathbf{f}_1,\mathbf{f}_2,\mathbf{f}_3) &= \{[2,1]^{\top},[1,3]^{\top}, [0,3]^{\top}\},\\
    \mathcal{S}_6(\mathbf{f}_1,\mathbf{f}_2,\mathbf{f}_3) &= \{[1,1]^{\top},[5,3]^{\top}\}.\\
\end{aligned}    
\end{equation*}
Choose $([1,2]^{\top},[0,0]^{\top},[1,4]^{\top},[4,3]^{\top},[1,3]^{\top},[1,1]^{\top})$ from $\mathcal{S}(\mathbf{f}_1,\mathbf{f}_2,\mathbf{f}_3)$. We first choose $\mathbf{r}_{2}=[0,0]^{\top}$, $\mathbf{r}_{3}=[1,4]^{\top}$ and solve 
\begin{equation}\label{ex:cong}
    \begin{aligned}
      \mathbf{f} &\equiv \mathbf{r}_{2} \mod{\mathbf{M}_{2}}, \\
      \mathbf{f} &\equiv \mathbf{r}_{3} \mod{\mathbf{M}_{3}}, 
    \end{aligned}
\end{equation}
to get $\mathbf{f}=[0,24]^{\top} \notin \mathcal{N}_{2}$. So we continue choose $\mathbf{r}_{2}=[0,0]^{\top}$ and $\mathbf{r}_{6}=[1,1]^{\top}$ and then solve a similar congruence system in (\ref{ex:cong}) to get $\mathbf{f}=[2,6]^{\top}\in \mathcal{N}_{2}$. We can check that its vector remainders $$([2,3]^{\top},[0,0]^{\top},[2,2]^{\top},[1,2]^{\top},[2,1]^{\top},[1,1]^{\top}) \in \mathcal{S}(\mathbf{f}_1,\mathbf{f}_2,\mathbf{f}_3),$$ which means $[2,6]^{\top}$ is what we want and denoted by $\mathbf{f}_3$.

Then, we remove the remainders of $\mathbf{f}_3$ from the residue sets and get the residue sets of $\mathbf{f}_1$ and $\mathbf{f}_2$:
\begin{equation*}
\begin{aligned}
    \mathcal{S}_1(\mathbf{f}_1,\mathbf{f}_2) &= \{[1,2]^{\top}, [0,0]^{\top}\},\\
    \mathcal{S}_2(\mathbf{f}_1,\mathbf{f}_2) &= \{[2,2]^{\top}, [2,0]^{\top}\},\\
    \mathcal{S}_3(\mathbf{f}_1,\mathbf{f}_2) &= \{[1,4]^{\top}, [0,3]^{\top}\},\\
    \mathcal{S}_4(\mathbf{f}_1,\mathbf{f}_2) &= \{[3,0]^{\top}, [4,3]^{\top}\},\\
    \mathcal{S}_5(\mathbf{f}_1,\mathbf{f}_2) &= \{[1,3]^{\top}, [0,3]^{\top}\},\\
    \mathcal{S}_6(\mathbf{f}_1,\mathbf{f}_2) &= \{[1,1]^{\top},[5,3]^{\top}\}.\\
\end{aligned}    
\end{equation*}
Since $|\mathcal{S}_6(\mathbf{f}_1,\mathbf{f}_2,\mathbf{f}_3)|<3$, we calculate $\mathcal{A}_6=\{[1,1]^{\top}\}$ and $\mathcal{B}_1(6)=\varnothing$.

We continue to choose $([1,2]^{\top},[2,2]^{\top},[0,3]^{\top},[3,0]^{\top},$ $[0,3]^{\top},[5,3]^{\top})$ from $\mathcal{S}(\mathbf{f}_1,\mathbf{f}_2)$ and then choose $\mathbf{r}_{5}=[0,3]^{\top}$ and $\mathbf{r}_{6}=[5,3]^{\top}$ and then solve the corresponding congruence system to get $\mathbf{f}_2=[0,3]^{\top}\in \mathcal{N}_{2}$ and its vector remainders are in $\mathcal{S}(\mathbf{f}_1,\mathbf{f}_2)$.
Then, we delete its vector remainders from $\mathcal{S}(\mathbf{f}_1,\mathbf{f}_2)$ and get
$\mathcal{S}(\mathbf{f}_1)=\{[1,2]^{\top}\} \times \{[2,2]^{\top}\} \times \{[1,4]^{\top}\} \times \{[3,0]^{\top}\} \times \{[1,3]^{\top}\} \times \{[1,1]^{\top}\}$, $\mathcal{A}_6=\{[1,1]^{\top}\}$ and $\mathcal{B}_2(6)=\{[5,3]^{\top}\}$. In fact, the undetermined vector $\mathbf{f}_1$ shares the same vector remainder $[5,3]^{\top}$ with $\mathbf{f}_2$ when modulo $\mathbf{M}_6$, but this vector was removed after $\mathbf{f}_2$ was correctly determined in the earlier steps, while the wrong vector $[1,1]^{\top}$ was retained in the newly obtained reside set $\mathcal{S}(\mathbf{f}_1)$. As a result, an error occurs in the subsequent steps and we need to use $\mathcal{A}_6$ and $\mathcal{B}_2(6)$ to recover the true residue set $\mathcal{S}(\mathbf{f}_1)$ of $\mathbf{f}_1$ as introduced in the above algorithm.

Now, we can choose any two vectors in $\mathcal{S}(\mathbf{f}_1)$ and solve their corresponding congruence system. We either get a vector that doesn't belong to the set $\mathcal{N}_{2}$, or it belongs to $\mathcal{N}_{2}$ but the tuple of vector remainders doesn't belong to $\mathcal{S}(\mathbf{f}_1)$. There is a simpler way by using MD-CRT directly on these six vector remainders to uniquely get a vector in $\mathcal{N}(\mathbf{R}_{\{1,2,3,4,5,6\}})$, where $\mathbf{R}_{\{1,2,3,4,5,6\}}=diag\{3600,3600\}$, since there is only one vector need to determine and $\mathcal{N}_2 \subset \mathcal{N}(\mathbf{R}_{\{1,2,3,4,5,6\}})$. 
By calculating, we get $\mathbf{f}=[1441,3176]^{\top} \notin \mathcal{N}_{2}$, which means $\mathcal{S}(\mathbf{f}_1)$ is incorrect. Therefore, we remove $[1,1]^{\top}\in \mathcal{A}_6$ from $\mathcal{S}_6(\mathbf{f}_1)$ and add $[5,3]^{\top}\in \mathcal{B}_2(6)$ back. In case there were more than 1 element in $\mathcal{A}_6$, we would try one element by one element in $\mathcal{A}_6$ to remove. If for a removed element, an error still occurs, we try another element in $\mathcal{A}_6$ to remove until there is no error occurs. This analysis can be readily applied to the case there were more than 1 element in $\mathcal{B}_2(6)$. Then, we use MD-CRT again and get $\mathbf{f}_1=[1,8]^{\top} \in \mathcal{N}_{2}$. 
\end{example}

In the above two examples, we have chosen the lcrms as described in order to make the structure of $\mathcal{N}_2$ clear to view. However, the resulting $\mathcal{N}_2$ does not have the maximal possible size, although it is larger than the absolute determinant values of all above six matrix moduli. This is because the lcrm of any group of integer matrices is not unique and thus the intersection ${\cal N}_2$ may have larger sizes, i.e., its dynamic range may be larger, as shown below. A larger $\mathcal{N}_2$ than that in Fig. \ref{fig:ex1} can be obtained by choosing  
\begin{equation}\notag
\begin{aligned} 
    \mathbf{R}_{\{1,4\}} =
    \begin{pmatrix}
        3 & 3 \\
        -20 & 28 \\
    \end{pmatrix},
    \mathbf{R}_{\{2,3\}} =
    \begin{pmatrix}
        4 & 4 \\
        -15 & 21 \\
    \end{pmatrix},
\end{aligned}
\end{equation}
which also means a larger dynamic range. However, the maximal possible size of $\mathcal{N}_{\eta}$ for a given group of $\gamma$ matrix moduli remains unknown.

The above algorithm is an extension of that in \cite{xia00} from scalar integers to integer vectors. Although it is similar, the work in this current paper completes the algorithm in \cite{xia00} by addressing the potential failures that were previously overlooked and also by providing a complexity analysis that is missing in \cite{xia00}.

\section{MD-CRT for Two Integer Vectors with Prior Information}\label{s4}

In this section, we focus on the second question, Question \ref{q2}, that we have proposed to study in this paper, i.e., if some prior information or some condition is given for the unknown integer vectors $\mathbf{f}_1,\cdots,\mathbf{f}_{\rho}$, can the maximal possible dynamic range be achieved?
We next only consider this question for two integer vectors, i.e., $\rho=2$. A new condition that achieves the maximal possible dynamic range is proposed with a detailed algorithm. Furthermore, we show that our newly obtained condition achieves a better result than all the existing ones when it returns to one dimension.

\subsection{A Condition to Achieve the Maximal Possible Dynamic Range}

Before presenting the main result, we introduce three lemmas.

\begin{lemma}\label{lm:1}
    Consider two integer vectors $\mathbf{f}_1$ and $\mathbf{f}_2$, and $\gamma$ non-singular integer matrices $\mathbf{M}_j$, $j=1,2,\cdots,\gamma$. Let $\mathbf{r}_{1,j}$ and $\mathbf{r}_{2,j}$ be the vector remainders of $\mathbf{f}_1$ and $\mathbf{f}_2$ modulo $\mathbf{M}_j$, respectively, for $j=1,2,\cdots,\gamma$. Let $\mathbf{R}$ be an lcrm of all the matrices $\mathbf{M}_j$, $j=1,2,\cdots,\gamma$. Then, for any $1 \leq j_1\neq j_2 \leq \gamma$, the vector remainder of $\mathbf{r}_{1,j_1} - \mathbf{r}_{2,j_1}$ modulo $\mathbf{M}_{j_1}$ is equal to that of $\mathbf{r}_{1,j_2} - \mathbf{r}_{2,j_2}$ modulo $\mathbf{M}_{j_2}$ if and only if
    \begin{equation}\label{eq:lm4.1}
       \mathbf{f}_1 - \mathbf{f}_2 \in LAT(\mathbf{R}) + \bigcap\limits_{j=1}^{\gamma}\mathcal{N}(\mathbf{M}_j). 
    \end{equation} 
\end{lemma}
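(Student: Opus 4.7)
My plan is to reduce the statement to a single common remainder of the difference $\mathbf{f}_1-\mathbf{f}_2$ modulo each $\mathbf{M}_j$, and then apply Proposition \ref{caplattice} to identify the intersection of the sublattices $LAT(\mathbf{M}_j)$ with $LAT(\mathbf{R})$.

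First, I would make the elementary observation that subtracting the congruences $\mathbf{f}_1\equiv\mathbf{r}_{1,j}\bmod\mathbf{M}_j$ and $\mathbf{f}_2\equiv\mathbf{r}_{2,j}\bmod\mathbf{M}_j$ gives $\mathbf{r}_{1,j}-\mathbf{r}_{2,j}\equiv\mathbf{f}_1-\mathbf{f}_2\bmod\mathbf{M}_j$. Hence the vector remainder of $\mathbf{r}_{1,j}-\mathbf{r}_{2,j}$ modulo $\mathbf{M}_j$ is exactly the unique element $\mathbf{s}_j\in\mathcal{N}(\mathbf{M}_j)$ representing $\mathbf{f}_1-\mathbf{f}_2$. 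The hypothesis that $\mathbf{s}_{j_1}=\mathbf{s}_{j_2}$ for all $j_1\neq j_2$ is therefore equivalent to the existence of a single vector $\mathbf{s}$ with $\mathbf{s}=\mathbf{s}_j$ for every $j=1,\ldots,\gamma$, and since each $\mathbf{s}_j\in\mathcal{N}(\mathbf{M}_j)$, this forces $\mathbf{s}\in\bigcap_{j=1}^{\gamma}\mathcal{N}(\mathbf{M}_j)$.

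For the forward direction, such a common $\mathbf{s}$ satisfies $\mathbf{f}_1-\mathbf{f}_2-\mathbf{s}\in LAT(\mathbf{M}_j)$ for every $j$, so by Proposition \ref{caplattice} it lies in $\bigcap_j LAT(\mathbf{M}_j)=LAT(\mathbf{R})$, yielding $\mathbf{f}_1-\mathbf{f}_2\in LAT(\mathbf{R})+\bigcap_j\mathcal{N}(\mathbf{M}_j)$. For the reverse direction, write $\mathbf{f}_1-\mathbf{f}_2=\mathbf{l}+\mathbf{s}$ with $\mathbf{l}\in LAT(\mathbf{R})$ and $\mathbf{s}\in\bigcap_j\mathcal{N}(\mathbf{M}_j)$. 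Since $\mathbf{R}$ is a common right multiple of each $\mathbf{M}_j$, we have $LAT(\mathbf{R})\subset LAT(\mathbf{M}_j)$, so $\mathbf{f}_1-\mathbf{f}_2\equiv\mathbf{s}\bmod\mathbf{M}_j$ for every $j$; combined with $\mathbf{s}\in\mathcal{N}(\mathbf{M}_j)$, this identifies $\mathbf{s}_j=\mathbf{s}$ for all $j$, so all the remainders coincide.

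I do not anticipate a real obstacle: the argument is essentially bookkeeping once one recognizes that the pairwise equality of remainders forces a single representative $\mathbf{s}$. The only subtle point to state carefully is the containment $LAT(\mathbf{R})\subset LAT(\mathbf{M}_j)$ used in the reverse direction, which is immediate from the definition of crm together with $\mathbf{R}=\mathbf{M}_j\mathbf{P}_j$ for some integer $\mathbf{P}_j$. Everything else follows directly from Proposition \ref{caplattice} and the uniqueness of the division representation of integer vectors recalled in the preliminaries.
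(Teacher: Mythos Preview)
Your proposal is correct and follows essentially the same approach as the paper: reduce to the common remainder $\mathbf{s}_j$ of $\mathbf{f}_1-\mathbf{f}_2$ modulo $\mathbf{M}_j$, then use Proposition~\ref{caplattice} (equivalently the containment $LAT(\mathbf{R})\subset LAT(\mathbf{M}_j)$) in both directions. The paper's proof differs only in presentation, invoking $LAT(\mathbf{R})=\bigcap_j LAT(\mathbf{M}_j)$ explicitly in the sufficiency direction where you cite the crm definition directly; the substance is identical.
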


\begin{proof}
    For $\mathbf{f}_1$ and $\mathbf{f}_2$, and their vector remainders $\mathbf{r}_{1,j}$ and $\mathbf{r}_{2,j}$, there exist unique integer vectors $\mathbf{n}_{1,j}$ and $\mathbf{n}_{2,j}$ for each $j=1,2,\cdots,\gamma$, such that 
    \begin{equation}\label{f1f2}
       \mathbf{f}_1=\mathbf{M}_j \mathbf{n}_{1,j}+\mathbf{r}_{1,j} \quad \text{and} \quad \mathbf{f}_2=\mathbf{M}_j \mathbf{n}_{2,j}+\mathbf{r}_{2,j}.  
    \end{equation}
    Then, we can get 
    \begin{equation}\notag
        \mathbf{f}_1-\mathbf{f}_2 \equiv \mathbf{r}_{1,j} - \mathbf{r}_{2,j} \mod \mathbf{M}_j,
    \end{equation}
    which means that the vector remainder of $\mathbf{r}_{1,j} - \mathbf{r}_{2,j}$ modulo $\mathbf{M}_j$ is equal to the vector remainder of $\mathbf{f}_1-\mathbf{f}_2$ modulo $\mathbf{M}_j$.
    As a result, we just need to prove that for any given $1 \leq j_1\neq j_2 \leq \gamma$, the vector remainder of $\mathbf{f}_1-\mathbf{f}_2$ modulo $\mathbf{M}_{j_1}$ is equal to the vector remainder of $\mathbf{f}_1-\mathbf{f}_2$ modulo $\mathbf{M}_{j_2}$ if and only if (\ref{eq:lm4.1}) holds.

    For the sufficient part,
    since $$\mathbf{f}_1 - \mathbf{f}_2 \in LAT(\mathbf{R}) + \bigcap\limits_{j=1}^{\gamma}\mathcal{N}(\mathbf{M}_j),$$ there exist two integer vectors $$\mathbf{n} \in LAT(\mathbf{R}) \quad \text{and} \quad \mathbf{r} \in \bigcap\limits_{j=1}^{\gamma}\mathcal{N}(\mathbf{M}_j)$$ such that $\mathbf{f}_1 - \mathbf{f}_2 = \mathbf{n} + \mathbf{r}$. From Prop. \ref{caplattice}, we have 
    $$LAT(\mathbf{R}) = \bigcap\limits_{j=1}^{\gamma} LAT(\mathbf{M}_j),$$
    which means that there exist two integer vectors $\mathbf{n}_1$ and $\mathbf{n}_2$ such that $\mathbf{n}=\mathbf{M}_{j_1}\mathbf{n_1}$ and $\mathbf{n}=\mathbf{M}_{j_2}\mathbf{n_2}$.
    Then, we have $\mathbf{f}_1-\mathbf{f}_2=\mathbf{M}_{j_1}\mathbf{n}_{1}+\mathbf{r}$ and $\mathbf{f}_1-\mathbf{f}_2=\mathbf{M}_{j_2}\mathbf{n}_{2}+\mathbf{r}$. Since $\mathbf{r} \in \mathcal{N}(\mathbf{M}_{j_1}) \cap \mathcal{N}(\mathbf{M}_{j_2})$, we have $\langle \mathbf{f}_1-\mathbf{f}_2 \rangle_{\mathbf{M}_{j_1}} = \langle \mathbf{f}_1-\mathbf{f}_2 \rangle_{\mathbf{M}_{j_2}} = \mathbf{r}$.

    For the necessary part, the vector remainders of $\mathbf{f}_{1}-\mathbf{f}_2$ modulo all the matrices $\mathbf{M}_j$, $j=1,2,\cdots,\gamma$, are equal, denoted by $\mathbf{r}$, and $\mathbf{r}\in \bigcap\limits_{j=1}^{\gamma}\mathcal{N}(\mathbf{M}_j)$. 
    In addition, $\mathbf{f}_1-\mathbf{f}_2$ has unique representation 
    $$\mathbf{f}_1-\mathbf{f}_2=\mathbf{M}_j\mathbf{n}_j+\mathbf{r},$$
    for each $j=1,2,\cdots,\gamma$, where $\mathbf{n}_j$ is an integer vector.

    As a result, 
    \begin{align*}
    \mathbf{f}_1 - \mathbf{f}_2 
     \in \bigcap\limits_{j=r}^{\gamma} \left( LAT(\mathbf{M}_j) + \mathbf{r} \right) 
     &\subset LAT(\mathbf{R}) + \mathbf{r} \\
     &\subset LAT(\mathbf{R}) + \bigcap\limits_{j=1}^{\gamma} \mathcal{N}(\mathbf{M}_j).
\end{align*}
    This completes the proof.
\end{proof}

\begin{lemma}\label{lm:2}
   Consider two integer vectors $\mathbf{f}_1$ and $\mathbf{f}_2$, and $\gamma$ non-singular integer matrices $\mathbf{M}_j$, $j=1,2,\cdots,\gamma$. Let $\mathbf{r}_{1,j}$ and $\mathbf{r}_{2,j}$ be the vector remainders of $\mathbf{f}_1$ and $\mathbf{f}_2$ modulo $\mathbf{M}_j$, respectively, for $j=1,2,\cdots,\gamma$. Let $\mathbf{R}$ be an lcrm of all the matrices $\mathbf{M}_j$, $j=1,2,\cdots,\gamma$. If 
    $$\mathbf{f}_1 - \mathbf{f}_2 \in LAT(\mathbf{R}) +\Big{(} \bigcap\limits_{j=1}^{\gamma}\mathcal{N}(\mathbf{M}_j)\setminus \{\mathbf{0}\}\Big{)} ,$$ 
    then  $\mathbf{r}_{1,j}\neq \mathbf{r}_{2,j}$ holds for all $j=1,2,\cdots,\gamma$. 
\end{lemma}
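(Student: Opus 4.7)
The plan is to directly combine the hypothesis with Lemma \ref{lm:1} so as to identify the common value of the vector remainders of $\mathbf{f}_1-\mathbf{f}_2$ modulo each $\mathbf{M}_j$, and then translate the nonzero condition on that common remainder into the desired inequalities $\mathbf{r}_{1,j}\neq \mathbf{r}_{2,j}$.

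First, I would use the hypothesis to write $\mathbf{f}_1-\mathbf{f}_2 = \mathbf{n}+\mathbf{r}$ with $\mathbf{n}\in LAT(\mathbf{R})$ and $\mathbf{r}\in \bigcap_{j=1}^{\gamma}\mathcal{N}(\mathbf{M}_j)$, where crucially $\mathbf{r}\neq \mathbf{0}$. Since $\mathbf{R}$ is a crm of each $\mathbf{M}_j$, one has $LAT(\mathbf{R})\subset LAT(\mathbf{M}_j)$, so $\mathbf{n}\in LAT(\mathbf{M}_j)$ for every $j$. Because $\mathbf{r}\in \mathcal{N}(\mathbf{M}_j)$, the decomposition $\mathbf{f}_1-\mathbf{f}_2=\mathbf{M}_j\mathbf{n}_j'+\mathbf{r}$ is precisely the unique division representation of $\mathbf{f}_1-\mathbf{f}_2$ modulo $\mathbf{M}_j$. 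Thus the vector remainder of $\mathbf{f}_1-\mathbf{f}_2$ modulo $\mathbf{M}_j$ equals $\mathbf{r}$ for every $j$, which is exactly the conclusion that the sufficient direction of Lemma \ref{lm:1} gives under the present hypothesis.

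Next, I would invoke the identity established in the proof of Lemma \ref{lm:1}, namely $\mathbf{f}_1-\mathbf{f}_2\equiv \mathbf{r}_{1,j}-\mathbf{r}_{2,j}\pmod{\mathbf{M}_j}$, so that $\mathbf{r}_{1,j}-\mathbf{r}_{2,j}\equiv \mathbf{r}\pmod{\mathbf{M}_j}$. Since $\mathbf{r}\neq \mathbf{0}$, the difference $\mathbf{r}_{1,j}-\mathbf{r}_{2,j}$ does not lie in $LAT(\mathbf{M}_j)$, so $\mathbf{r}_{1,j}$ and $\mathbf{r}_{2,j}$ belong to different cosets of $LAT(\mathbf{M}_j)$. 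Because $\mathcal{N}(\mathbf{M}_j)$ is a transversal for these cosets (each coset is represented by exactly one element of $\mathcal{N}(\mathbf{M}_j)$), and both $\mathbf{r}_{1,j}$ and $\mathbf{r}_{2,j}$ already lie in $\mathcal{N}(\mathbf{M}_j)$, it follows that $\mathbf{r}_{1,j}\neq \mathbf{r}_{2,j}$, as required. The main conceptual point — and really the only step requiring any care — is recognizing that the nonzero component $\mathbf{r}$ appearing in the decomposition is exactly the common vector remainder produced by Lemma \ref{lm:1}; the rest is routine coset bookkeeping.
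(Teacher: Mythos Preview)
Your proof is correct and is essentially the contrapositive of the paper's argument. The paper assumes $\mathbf{r}_{1,j}=\mathbf{r}_{2,j}$ for some $j$, deduces $\mathbf{f}_1-\mathbf{f}_2\in LAT(\mathbf{M}_j)$, and then observes that this rules out membership in $LAT(\mathbf{M}_j)+(\mathcal{N}(\mathbf{M}_j)\setminus\{\mathbf{0}\})\supset LAT(\mathbf{R})+(\bigcap_j\mathcal{N}(\mathbf{M}_j)\setminus\{\mathbf{0}\})$; you instead unpack the hypothesis directly to identify the common nonzero remainder $\mathbf{r}$ of $\mathbf{f}_1-\mathbf{f}_2$ modulo every $\mathbf{M}_j$ (effectively redoing the sufficient direction of Lemma~\ref{lm:1}) and conclude via the transversal property of $\mathcal{N}(\mathbf{M}_j)$. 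Both routes boil down to the same observation that the vector remainder of $\mathbf{f}_1-\mathbf{f}_2$ modulo $\mathbf{M}_j$ is nonzero if and only if $\mathbf{r}_{1,j}\neq\mathbf{r}_{2,j}$.
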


\begin{proof}
    If there exists some $j$ such that $\mathbf{r}_{1,j}= \mathbf{r}_{2,j}$, we can get that $\mathbf{f}_1 - \mathbf{f}_2 = \mathbf{M}_j(\mathbf{n}_{1,j}-\mathbf{n}_{2,j})$ from (\ref{f1f2}), which means that $\mathbf{f}_1 - \mathbf{f}_2 \in LAT(\mathbf{M}_j)$. So, it is impossible that $\mathbf{f}_1 - \mathbf{f}_2 \in LAT(\mathbf{M}_j) + (\mathcal{N}(\mathbf{M}_j)\setminus \{\mathbf{0}\})$.

    Since
    \begin{align*}
LAT(\mathbf{R}) +\Big{(} \bigcap\limits_{j=1}^{\gamma}\mathcal{N}(\mathbf{M}_j)&\setminus \{\mathbf{0}\}\Big{)}\\  &\subset LAT(\mathbf{M}_j) + \Big{(}\mathcal{N}(\mathbf{M}_j)\setminus \{\mathbf{0}\}\Big{)},
    \end{align*}
    it is impossible that 
    $$\mathbf{f}_1 - \mathbf{f}_2 \in LAT(\mathbf{R}) +\Big{(} \bigcap\limits_{j=1}^{\gamma}\mathcal{N}(\mathbf{M}_j)\setminus \{\mathbf{0}\}\Big{)} ,$$ 
    which contradicts the condition.
\end{proof}

For any integer matrix $\mathbf{M}$, denote $$\mathcal{L}_{1/2}(\mathbf{M}) = \{ \mathbf{k}\in \mathbb{Z}^{D}\ |\ \mathbf{k}=\mathbf{M}(\frac{1}{2}\mathbf{n}) \ \text{for some}\ \mathbf{n} \in \mathbb{Z}^{D}\}.$$ 
 We can easily see that $$\mathcal{L}_{1/2}(\mathbf{M}) = \frac{1}{2}LAT(\mathbf{M}) \cap \mathbb{Z}^{D}.$$
In other words, $\mathcal{L}_{1/2}(\mathbf{M})$ is the set of all the integer vectors in $1/2$ scaled lattice generated by $\mathbf{M}$. And all the vectors in $\mathcal{L}_{1/2}(\mathbf{M})$ are symmetric in terms of the zero vector, i.e., if $\mathbf{k} \in \mathcal{L}_{1/2}(\mathbf{M})$, we also have $-\mathbf{k} \in \mathcal{L}_{1/2}(\mathbf{M})$.
Then, we have the following lemma.

\begin{lemma}\label{lm:3}
    Consider two integer vectors $\mathbf{f}_1$ and $\mathbf{f}_2$, and $\gamma$ non-singular integer matrices $\mathbf{M}_j$, $j=1,2,\cdots,\gamma$. Let $\mathbf{r}_{1,j}$ and $\mathbf{r}_{2,j}$ be the vector remainders of $\mathbf{f}_1$ and $\mathbf{f}_2$ modulo $\mathbf{M}_j$, respectively, for $j=1,2,\cdots,\gamma$. Then, $$\mathbf{r}_{1,j} - \mathbf{r}_{2,j} \not\equiv \mathbf{r}_{2,j} - \mathbf{r}_{1,j} \mod \mathbf{M}_j$$ holds for all $j=1,2,\cdots,\gamma$ if and only if $$\mathbf{f}_1 - \mathbf{f}_2 \notin \bigcup\limits_{j=1}^{\gamma}\mathcal{L}_{1/2}(\mathbf{M}_j).$$
\end{lemma}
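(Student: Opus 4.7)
The plan is to reduce the modular equality in the statement to a single lattice membership condition at each modulus, and then take complements/unions. The starting observation is that, from the division representation (\ref{f1f2}) used in the proofs of Lemmas \ref{lm:1} and \ref{lm:2}, we have $\mathbf{r}_{1,j}-\mathbf{r}_{2,j}\equiv \mathbf{f}_1-\mathbf{f}_2 \bmod \mathbf{M}_j$ and similarly $\mathbf{r}_{2,j}-\mathbf{r}_{1,j}\equiv \mathbf{f}_2-\mathbf{f}_1 \bmod \mathbf{M}_j$. Thus the congruence $\mathbf{r}_{1,j}-\mathbf{r}_{2,j}\equiv \mathbf{r}_{2,j}-\mathbf{r}_{1,j} \bmod \mathbf{M}_j$ is equivalent to $2(\mathbf{f}_1-\mathbf{f}_2)\equiv \mathbf{0} \bmod \mathbf{M}_j$, i.e., $2(\mathbf{f}_1-\mathbf{f}_2)\in LAT(\mathbf{M}_j)$.

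I would then unwind the definition of $\mathcal{L}_{1/2}(\mathbf{M}_j)$: $2(\mathbf{f}_1-\mathbf{f}_2)\in LAT(\mathbf{M}_j)$ means there is an integer vector $\mathbf{n}$ with $2(\mathbf{f}_1-\mathbf{f}_2)=\mathbf{M}_j\mathbf{n}$, equivalently $\mathbf{f}_1-\mathbf{f}_2=\mathbf{M}_j(\tfrac{1}{2}\mathbf{n})$. Since $\mathbf{f}_1-\mathbf{f}_2\in\mathbb{Z}^D$, this is exactly the defining condition for $\mathbf{f}_1-\mathbf{f}_2\in\mathcal{L}_{1/2}(\mathbf{M}_j)=\tfrac{1}{2}LAT(\mathbf{M}_j)\cap\mathbb{Z}^D$. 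Hence, at each fixed $j$, the equality $\mathbf{r}_{1,j}-\mathbf{r}_{2,j}\equiv \mathbf{r}_{2,j}-\mathbf{r}_{1,j}\bmod\mathbf{M}_j$ is equivalent to $\mathbf{f}_1-\mathbf{f}_2\in\mathcal{L}_{1/2}(\mathbf{M}_j)$.

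To finish, I take the contrapositive, quantified over $j$. The hypothesis of the lemma is that the congruence fails for every $j=1,\dots,\gamma$, which by the equivalence above is equivalent to $\mathbf{f}_1-\mathbf{f}_2\notin\mathcal{L}_{1/2}(\mathbf{M}_j)$ for every $j$, i.e., $\mathbf{f}_1-\mathbf{f}_2\notin\bigcup_{j=1}^{\gamma}\mathcal{L}_{1/2}(\mathbf{M}_j)$. Both directions of the ``if and only if'' follow immediately from this pointwise equivalence.

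There is no genuine obstacle here; the proof is essentially definition-chasing. The only small care needed is the passage $2(\mathbf{f}_1-\mathbf{f}_2)=\mathbf{M}_j\mathbf{n}\Longleftrightarrow \mathbf{f}_1-\mathbf{f}_2=\mathbf{M}_j(\tfrac{1}{2}\mathbf{n})$, and noting that $\tfrac{1}{2}\mathbf{n}$ need not be an integer vector, which is fine because the definition of $\mathcal{L}_{1/2}(\mathbf{M})$ only requires $\mathbf{n}\in\mathbb{Z}^D$ while $\mathbf{M}(\tfrac{1}{2}\mathbf{n})$ itself lies in $\mathbb{Z}^D$; this matches exactly the given definition.
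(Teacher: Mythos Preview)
Your proof is correct and follows essentially the same approach as the paper: both reduce the congruence at each $j$ to $2(\mathbf{f}_1-\mathbf{f}_2)\in LAT(\mathbf{M}_j)$ (the paper writes this as $\mathbf{f}_1-\mathbf{f}_2\equiv\mathbf{f}_2-\mathbf{f}_1\bmod\mathbf{M}_j$, then $\mathbf{f}_1-\mathbf{f}_2=\mathbf{M}_j(\tfrac{1}{2}\mathbf{n})$), identify this with membership in $\mathcal{L}_{1/2}(\mathbf{M}_j)$, and pass to the contrapositive over all $j$. The arguments are identical up to phrasing.
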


\begin{proof}
    It is sufficient to prove that there exists $j$ with $1 \leq j \leq \gamma$ such that $\mathbf{r}_{1,j} - \mathbf{r}_{2,j} \equiv \mathbf{r}_{2,j} - \mathbf{r}_{1,j} \mod \mathbf{M}_j$ holds if and only if $\mathbf{f}_1 - \mathbf{f}_2 \in \bigcup\limits_{j=1}^{\gamma}\mathcal{L}_{1/2}(\mathbf{M}_j)$.
    
    For any given $j$ with $1 \leq j \leq \gamma$, we can get 
    \begin{equation}\notag
        \mathbf{f}_1-\mathbf{f}_2 \equiv \mathbf{r}_{1,j} - \mathbf{r}_{2,j} \mod \mathbf{M}_j 
    \end{equation}
    and
    \begin{equation}\notag
         \mathbf{f}_2-\mathbf{f}_1 \equiv \mathbf{r}_{2,j} - \mathbf{r}_{1,j} \mod \mathbf{M}_j
    \end{equation}
    from (\ref{f1f2}).
    As s result, $$\mathbf{r}_{1,j} - \mathbf{r}_{2,j} \equiv \mathbf{r}_{2,j} - \mathbf{r}_{1,j} \mod \mathbf{M}_j$$ holds if and only if $$\mathbf{f}_{1} - \mathbf{f}_2 \equiv \mathbf{f}_2 - \mathbf{f}_1 \mod \mathbf{M}_j$$ holds, which is also equivalent to $$\mathbf{f}_1-\mathbf{f}_2 = \mathbf{M}_j \mathbf{n} + (\mathbf{f}_2-\mathbf{f}_1)$$ for some integer vector $\mathbf{n}$. Equivalently, $$\mathbf{f}_1 - \mathbf{f}_2 = \mathbf{M}_j(\frac{1}{2}\mathbf{n})$$ for some integer vector $\mathbf{n}$, i.e., $\mathbf{f}_1 - \mathbf{f}_2 \in \mathcal{L}_{1/2}(\mathbf{M}_j)$.

    Therefore, there exists $j$ with $1 \leq j \leq \gamma$ such that $\mathbf{r}_{1,j} - \mathbf{r}_{2,j} \equiv \mathbf{r}_{2,j} - \mathbf{r}_{1,j} \mod \mathbf{M}_j$ holds if and only if $\mathbf{f}_1 - \mathbf{f}_2 \in \bigcup\limits_{j=1}^{\gamma}\mathcal{L}_{1/2}(\mathbf{M}_j)$. 
\end{proof}

Next, we present our new condition to achieve the maximal possible dynamic range of MD-CRT for two integer vectors.

\begin{thm}\label{th:condition}
    Consider two integer vectors $\mathbf{f}_1$ and $\mathbf{f}_2$ and $\gamma$ non-singular integer matrices $\mathbf{M}_1,\mathbf{M}_2,\cdots,\mathbf{M}_{\gamma}$. Let $\mathbf{r}_{i,j}$ be the vector remainder of $\mathbf{f}_i$ modulo $\mathbf{M}_j$, for $i=1,2$ and $j=1,2,\cdots,\gamma$, and $\mathcal{S}_j=\{\mathbf{r}_{1,j},\mathbf{r}_{2,j}\}$ be the residue set of vector remainders of $\mathbf{f}_1$ and $\mathbf{f}_2$ modulo $\mathbf{M}_j$.  
    If 
    \begin{equation}\label{condition}
     \mathbf{f}_1 - \mathbf{f}_2 \ \text{or} \ \mathbf{f}_2 - \mathbf{f}_1  \in\Big{(} LAT(\mathbf{R}) + \bigcap\limits_{j=1}^{\gamma}\mathcal{N}(\mathbf{M}_j)\Big{)} \setminus \bigcup\limits_{j=1}^{\gamma}\mathcal{L}_{1/2}(\mathbf{M}_j), 
    \end{equation}
     where $\mathbf{R}$ is an lcrm of all the matrix moduli, then we can uniquely reconstruct $\mathbf{f}_1$ and $\mathbf{f}_2$ from all the vector residue sets $\mathcal{S}_j$, for $j=1,2,\cdots,\gamma$, when $\{\mathbf{f}_1,\mathbf{f}_2\}\subset \mathcal{N}(\mathbf{R}).$
\end{thm}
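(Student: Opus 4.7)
The plan is to apply the three preparatory lemmas to extract the structure forced by condition~(\ref{condition}), and then to argue uniqueness by ruling out any sign-mixed thread through the residue sets. Without loss of generality (swap labels if necessary), assume $\mathbf{d}:=\mathbf{f}_1-\mathbf{f}_2$ lies in the set described by~(\ref{condition}). Since $\mathbf{0}\in \mathcal{L}_{1/2}(\mathbf{M}_j)$ for every $j$ (take $\mathbf{n}=\mathbf{0}$), we have $\mathbf{d}\neq \mathbf{0}$; combined with $\mathbf{f}_1,\mathbf{f}_2\in \mathcal{N}(\mathbf{R})$ being distinct coset representatives modulo $LAT(\mathbf{R})$, this guarantees that in any decomposition $\mathbf{d}=\mathbf{n}+\mathbf{r}$ with $\mathbf{n}\in LAT(\mathbf{R})$ and $\mathbf{r}\in \bigcap_j \mathcal{N}(\mathbf{M}_j)$ the remainder part $\mathbf{r}$ is nonzero. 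Lemma~\ref{lm:2} therefore applies and yields $\mathbf{r}_{1,j}\neq \mathbf{r}_{2,j}$ for every $j$, so each $\mathcal{S}_j$ is a two-element set.

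Next, Lemma~\ref{lm:1} applied to $(\mathbf{f}_1,\mathbf{f}_2)$ produces a single integer vector $\mathbf{r}^*\in \bigcap_j \mathcal{N}(\mathbf{M}_j)$ equal to the vector remainder of $\mathbf{r}_{1,j}-\mathbf{r}_{2,j}$ modulo $\mathbf{M}_j$ for \emph{every} $j$. The second half of~(\ref{condition}), together with Lemma~\ref{lm:3}, gives $\mathbf{r}^*\not\equiv -\mathbf{r}^* \pmod{\mathbf{M}_j}$ for every $j$, so that the two candidate ``signed differences'' $\mathbf{a}_j-\mathbf{b}_j$ and $\mathbf{b}_j-\mathbf{a}_j$ extracted from $\mathcal{S}_j=\{\mathbf{a}_j,\mathbf{b}_j\}$ remain distinguishable modulo $\mathbf{M}_j$.

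For the uniqueness of the pair, suppose $\{\mathbf{g}_1,\mathbf{g}_2\}\subset \mathcal{N}(\mathbf{R})$ is another pair satisfying~(\ref{condition}) that produces the same residue sets $\mathcal{S}_j$. After relabeling so that $\mathbf{w}:=\mathbf{g}_1-\mathbf{g}_2$ satisfies~(\ref{condition}), Lemma~\ref{lm:1} gives a common remainder $\mathbf{s}^*\in \bigcap_j \mathcal{N}(\mathbf{M}_j)$ for $\mathbf{w}$ modulo every $\mathbf{M}_j$. From the residue-set equality $\{\mathbf{g}_1 \bmod \mathbf{M}_j,\mathbf{g}_2 \bmod \mathbf{M}_j\}=\mathcal{S}_j$ we also have $\mathbf{w}\equiv \pm \mathbf{r}^* \pmod{\mathbf{M}_j}$ with a sign that in principle can depend on $j$. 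The decisive step is to rule out a mixed sign pattern: if $\mathbf{s}^*\equiv \mathbf{r}^*\pmod{\mathbf{M}_{j_1}}$, then since both vectors lie in the coset-representative set $\mathcal{N}(\mathbf{M}_{j_1})$ one obtains the global equality $\mathbf{s}^*=\mathbf{r}^*$; if in addition $\mathbf{s}^*\equiv -\mathbf{r}^* \pmod{\mathbf{M}_{j_2}}$, then $2\mathbf{r}^*\in LAT(\mathbf{M}_{j_2})$, and using $\mathbf{d}\equiv \mathbf{r}^*\pmod{\mathbf{M}_{j_2}}$ one concludes $\mathbf{d}\in \mathcal{L}_{1/2}(\mathbf{M}_{j_2})$, contradicting~(\ref{condition}). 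Hence the sign is uniform in $j$, and in either case the thread $\mathbf{g}_1 \bmod \mathbf{M}_j$ over $j$ consistently selects either $\mathbf{r}_{1,j}$ for all $j$ or $\mathbf{r}_{2,j}$ for all $j$; MD-CRT (Prop.~\ref{MD-CRT}) inside $\mathcal{N}(\mathbf{R})$ then gives $\{\mathbf{g}_1,\mathbf{g}_2\}=\{\mathbf{f}_1,\mathbf{f}_2\}$.

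The main obstacle is exactly the mixed-sign step just highlighted. A naive attempt to subtract congruences across different moduli only yields the weaker statement $2\mathbf{r}^*\in LAT(\mathbf{M}_{j_1})+LAT(\mathbf{M}_{j_2})$, which is not enough to contradict the condition. The proof genuinely needs the fact that $\mathbf{s}^*$ and $\mathbf{r}^*$ both lie in the common parallelepiped $\bigcap_j \mathcal{N}(\mathbf{M}_j)$, so that the $\mathbf{M}_{j_1}$-congruence can first be promoted to the global equality $\mathbf{s}^*=\mathbf{r}^*$; only then does the $\mathbf{M}_{j_2}$-congruence produce the desired $\mathcal{L}_{1/2}$ obstruction. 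The same case analysis yields a constructive reconstruction algorithm: compute the two candidate signed differences in each $\mathcal{S}_j$, find the common integer vector $\mathbf{r}^*$ (or its uniform negative if it happens to be a single integer vector across all moduli), and recover $\mathbf{f}_1$ and $\mathbf{f}_2$ from the two resulting threads via MD-CRT.
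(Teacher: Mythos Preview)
Your argument is correct, and it reaches the conclusion by a genuinely different route from the paper's proof, though both rest on the same three lemmas.

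The paper's proof is constructive: after establishing via Lemmas~\ref{lm:2} and~\ref{lm:3} that each $\mathcal{S}_j$ has two elements with distinguishable signed differences, it computes the two candidate differences $\mathbf{d}_{j1},\mathbf{d}_{j2}$ modulo each $\mathbf{M}_j$, and then splits into two cases according to whether $\mathbf{f}_2-\mathbf{f}_1$ also lies in $LAT(\mathbf{R})+\bigcap_j\mathcal{N}(\mathbf{M}_j)$. If it does not, the \emph{converse} direction of Lemma~\ref{lm:1} forces the ``wrong'' signed differences to disagree across moduli, so the common value $\mathbf{d}^*$ is unambiguous; if it does, both choices of $\mathbf{d}^*$ are present but simply swap the labels of $\mathbf{f}_1$ and $\mathbf{f}_2$. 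Either way one obtains an explicit partition into two threads and applies MD-CRT.

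Your proof is instead an abstract uniqueness argument: you posit a second pair $\{\mathbf{g}_1,\mathbf{g}_2\}\subset\mathcal{N}(\mathbf{R})$ satisfying~(\ref{condition}) with the same residue sets, extract its own common remainder $\mathbf{s}^*\in\bigcap_j\mathcal{N}(\mathbf{M}_j)$ via Lemma~\ref{lm:1}, and rule out a mixed sign pattern by first promoting a single congruence $\mathbf{s}^*\equiv\mathbf{r}^*\pmod{\mathbf{M}_{j_1}}$ to the global equality $\mathbf{s}^*=\mathbf{r}^*$ (using that both lie in $\mathcal{N}(\mathbf{M}_{j_1})$), and then deriving $\mathbf{d}\in\mathcal{L}_{1/2}(\mathbf{M}_{j_2})$ from any opposite-sign index. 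This sidesteps the paper's case split entirely and never invokes the ``only if'' direction of Lemma~\ref{lm:1}; the price is that you assume the comparator pair also satisfies~(\ref{condition}), which is legitimate since the theorem treats~(\ref{condition}) as prior information. The paper's version delivers a slightly stronger statement (the algorithm recovers the pair without testing~(\ref{condition}) on candidates), while your version is cleaner for the uniqueness claim and identifies the $\bigcap_j\mathcal{N}(\mathbf{M}_j)$ promotion step as the essential mechanism.
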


It is interesting to note that, since $LAT(\mathbf{R})$ is independent of a particular form of an lcrm $\mathbf{R}$ as shown in Prop. \ref{samelattice}, the condition (\ref{condition}) is also independent of the choice of an lcrm $\mathbf{R}$ of $\mathbf{M}_j$, $1\leq j\leq \gamma$. Before presenting the proof, we first give an example when $\gamma=2$ to show that if $\mathbf{k}$ belongs to the set in (\ref{condition}), $-\mathbf{k}$ may or may not be in the set
$$LAT(\mathbf{R})+\big{(}\mathcal{N}(\mathbf{M}_1)\cap \mathcal{N}(\mathbf{M}_2)\big{)},$$ if we make no assumptions about the matrices $\mathbf{M}_j$ for $j=1,2$. Example \ref{ex:notequal} below will help the following proof of Theorem \ref{th:condition}.

\begin{example}\label{ex:notequal}
    Let 
    \begin{equation}\notag
       \mathbf{M}_1 =  \begin{bmatrix}
            3 &0\\
            0 &4\\
        \end{bmatrix}, \quad 
        \mathbf{M}_2 =  \begin{bmatrix}
            3 &0\\
            0 &3\\
        \end{bmatrix}.
    \end{equation}
    Their lcrm is $\mathbf{R}=diag\{3,12\}$ and $\mathcal{N}(\mathbf{M}_1)\cap \mathcal{N}(\mathbf{M}_2) = \{0,1,2\}^{2}$. 
    Then, we have 
    \begin{multline*}
    [1, 0]^{\top},\ [0, 1]^{\top} \in 
   \Big( LAT(\mathbf{R}) + \big( \mathcal{N}(\mathbf{M}_1) 
    \cap \mathcal{N}(\mathbf{M}_2) \big) \Big) \\
   \setminus \big( \mathcal{L}_{1/2}(\mathbf{M}_1) 
    \cup \mathcal{L}_{1/2}(\mathbf{M}_2) \big).
    \end{multline*}
    For vector $[-1,0]^{\top}$, we have 
    \begin{equation}\notag
        \begin{bmatrix}
            -1 \\
            0 \\
        \end{bmatrix} =  \begin{bmatrix}
            3 &0\\
            0 &12\\
        \end{bmatrix} \begin{bmatrix}
            -1\\
            0\\
        \end{bmatrix}+ \begin{bmatrix}
            2\\
            0\\
        \end{bmatrix},
    \end{equation}
    which means $$[-1,0]^{\top} \in LAT(\mathbf{R})+\big{(}\mathcal{N}(\mathbf{M}_1)\cap \mathcal{N}(\mathbf{M}_2)\big{)}.$$
    For vector $[0,-1]^{\top}$, it has unique representation given $\mathbf{R}$:
    \begin{equation}\notag
        \begin{bmatrix}
            0 \\
           -1 \\
        \end{bmatrix} =  \begin{bmatrix}
            3 &0\\
            0 &12\\
        \end{bmatrix} \begin{bmatrix}
            0\\
            -1\\
        \end{bmatrix}+ \begin{bmatrix}
            0\\
            11\\
        \end{bmatrix},
    \end{equation}
    where $[0,11]^{\top} \in \mathcal{N}(\mathbf{R})$.
    Since $\mathcal{N}(\mathbf{M}_1)\cap \mathcal{N}(\mathbf{M}_2) \subset \mathcal{N}(\mathbf{R})$ and $[0,11]^{\top} \notin \mathcal{N}(\mathbf{M}_1)\cap \mathcal{N}(\mathbf{M}_2)$, we get that $$[0,-1]^{\top} \notin LAT(\mathbf{R})+\big{(}\mathcal{N}(\mathbf{M}_1)\cap \mathcal{N}(\mathbf{M}_2)\big{)}.$$
\end{example}

Now, we begin to prove Theorem \ref{th:condition}. The key of the following proof is to demonstrate that when (\ref{condition})
 is satisfied, we can determine the correspondence between the vector remainders in each residue set $\mathcal{S}_j$ and the vectors $\mathbf{f}_1,\mathbf{f}_2$ to determine . Then, we can use MD-CRT separately to uniquely determine them in $\mathcal{N}(\mathbf{R})$.
\begin{proof}
Due to symmetry, we only prove the case when $\mathbf{f}_1-\mathbf{f}_2$ is in the set in (\ref{condition}).
 Firstly, we show that 
\begin{align}\label{theq1}
&\Big( LAT(\mathbf{R}) + \bigcap\limits_{j=1}^{\gamma} \mathcal{N}(\mathbf{M}_j) \Big)
\setminus \bigcup\limits_{j=1}^{\gamma} \mathcal{L}_{1/2}(\mathbf{M}_j) \notag \\
&= \Big( LAT(\mathbf{R}) + \big( \bigcap\limits_{j=1}^{\gamma} 
\mathcal{N}(\mathbf{M}_j) \setminus \{ \mathbf{0} \} \big) \Big) 
\setminus \bigcup\limits_{j=1}^{\gamma} \mathcal{L}_{1/2}(\mathbf{M}_j),
\end{align}
where the condition on the right hand side makes it easier to apply Lemma \ref{lm:2} above directly.
Since 
$$LAT(\mathbf{R}) = \bigcap\limits_{j=1}^{\gamma}LAT(\mathbf{M}_j) \subset \bigcap\limits_{j=1}^{\gamma}\mathcal{L}_{1/2}(\mathbf{M}_j),$$
we can get 
$$LAT(\mathbf{R}) \subset \bigcup\limits_{j=1}^{\gamma}\mathcal{L}_{1/2}(\mathbf{M}_j),$$
which means (\ref{theq1}) holds.

Then, from Lemma \ref{lm:2}, there are exactly two vectors in each residue set $\mathcal{S}_j$. Since we don't know the correspondence between two vector remainders in each residue set $\mathcal{S}_j$ and the two vectors to determine, $\mathbf{f}_1,\mathbf{f}_2$, we use $\mathbf{v}_{j1}$ and $\mathbf{v}_{j2}$ to denote the two vectors in $\mathcal{S}_j$ for simplicity.    

For each residue set $\mathcal{S}_j$, we first calculate the vector remainder of $\mathbf{v}_{j1}-\mathbf{v}_{j2}$ modulo $\mathbf{M}_j$, denoted by $\mathbf{d}_{j1}$, and the vector remainder of $\mathbf{v}_{j2}-\mathbf{v}_{j1}$ modulo $\mathbf{M}_j$, denoted by $\mathbf{d}_{j2}$. From Lemma \ref{lm:3}, we can get that $\mathbf{d}_{j1} \neq \mathbf{d}_{j2}$, for all $j=1,2,\cdots,\gamma$, since $$\mathbf{f}_1 - \mathbf{f}_2 \notin \bigcup\limits_{j=1}^{\gamma}\mathcal{L}_{1/2}(\mathbf{M}_j).$$ It also means that we get two distinct vectors for each residue set $\mathcal{S}_j$ in the above procedure. Denote that $\mathcal{D}_j=\{\mathbf{d}_{j1},\mathbf{d}_{j2}\}$ for $j=1,2,\cdots,\gamma$.

Since $$\mathbf{f}_1 - \mathbf{f}_2 \in LAT(\mathbf{R}) + \bigcap\limits_{j=1}^{\gamma}\mathcal{N}(\mathbf{M}_j),$$
we can get that the $\gamma$ vector reminders of $\mathbf{r}_{1,j}-\mathbf{r}_{2,j}$ modulo $\mathbf{M}_j$ for all $1 \leq j \leq \gamma$ are the same from Lemma \ref{lm:1}. Note that the vector remainder of $\mathbf{r}_{1,j}-\mathbf{r}_{2,j}$ modulo $\mathbf{M}_j$ is one of $\mathbf{d}_{j1}$ and $\mathbf{d}_{j2}$. Since $\mathbf{d}_{j1} \neq \mathbf{d}_{j2}$ for all $j=1,2,\cdots,\gamma$, then we can find one vector $\mathbf{d}^{*}$ such that for each $j=1,2,\cdots,\gamma$, there exist exactly one vector $\mathbf{d}_{ji}$, $i=1\ \text{or}\ 2$, in $\mathcal{D}_j$ such that $\mathbf{d}_{ji}$= $\mathbf{d}^{*}$.

If \footnote{According to Example \ref{ex:notequal}, although $$\mathbf{f}_1 - \mathbf{f}_2 \in\Big{(} LAT(\mathbf{R}) + \bigcap\limits_{j=1}^{\gamma}\mathcal{N}(\mathbf{M}_j)\Big{)} \setminus \bigcup\limits_{j=1}^{\gamma}\mathcal{L}_{1/2}(\mathbf{M}_j),$$ 
we are not sure whether $$\mathbf{f}_2 - \mathbf{f}_1 \in LAT(\mathbf{R}) + \bigcap\limits_{j=1}^{\gamma}\mathcal{N}(\mathbf{M}_j)$$ 
or not.}
$$\mathbf{f}_2 - \mathbf{f}_1 \notin LAT(\mathbf{R}) + \bigcap\limits_{j=1}^{\gamma}\mathcal{N}(\mathbf{M}_j),$$
the vector remainders of $\mathbf{r}_{2,j}-\mathbf{r}_{1,j}$ modulo $\mathbf{M}_j$ for all $1 \leq j \leq \gamma$ are not the same according to Lemma \ref{lm:1}. As a result, the selected vector $\mathbf{d}^{*}$ is the vector remainder of $\mathbf{r}_{1,j}-\mathbf{r}_{2,j}$ modulo $\mathbf{M}_j$ for each j, $ 1\leq j\leq \gamma$. For each $j=1,2,\cdots,\gamma$, if $\mathbf{d}_{j1}=\mathbf{d}^{*}$, we can get that $\mathbf{v}_{j1}$ is the vector remainder $\mathbf{r}_{1,j}$ and $\mathbf{v}_{j2}$ is the vector remainder $\mathbf{r}_{2,j}$ since $\mathbf{d}_{j1}$ is the vector remainder of $\mathbf{v}_{j1}-\mathbf{v}_{j2}$ modulo $\mathbf{M}_j$. So, let $\mathbf{r}_{1,j}=\mathbf{v}_{j1}$ and $\mathbf{r}_{2,j}=\mathbf{v}_{j2}$. Similarly, if $\mathbf{d}_{j2}=\mathbf{d}^{*}$, we can get that $\mathbf{v}_{j2}$ is the vector remainder $\mathbf{r}_{1,j}$ and $\mathbf{v}_{j1}$ is the vector remainder $\mathbf{r}_{2,j}$ and let $\mathbf{r}_{1,j}=\mathbf{v}_{j2}$ and $\mathbf{r}_{2,j}=\mathbf{v}_{j1}$. 
In this way, we successfully distinguish, within each set $\mathcal{S}_j$, which vector corresponds to the vector remainder of $\mathbf{f}_1$ and which corresponds to the vector remainder of $\mathbf{f}_2$.

If $$\mathbf{f}_2 - \mathbf{f}_1 \in LAT(\mathbf{R}) + \bigcap\limits_{j=1}^{\gamma}\mathcal{N}(\mathbf{M}_j),$$
the vector remainders of $\mathbf{r}_{2,j}-\mathbf{r}_{1,j}$ modulo $\mathbf{M}_j$ for all $1 \leq j \leq \gamma$ are also the same according to Lemma \ref{lm:1}. It also means that all the sets $\mathcal{D}_j$ are the same. So, the selected $\mathbf{d}^{*}$ is either the vector remainder of $\mathbf{r}_{1,j}-\mathbf{r}_{2,j}$ modulo $\mathbf{M}_j$ or the vector remainder of $\mathbf{r}_{2,j}-\mathbf{r}_{1,j}$ modulo $\mathbf{M}_j$.
Since we actually care about the set $\{\mathbf{f}_1,\mathbf
{f}_2\}$ and don't care about which one is $\mathbf{f}_1$ and which one is the other,
we then do the same operations: if $\mathbf{d}_{j1}=\mathbf{d}^{*}$, let  $\mathbf{r}_{1,j}=\mathbf{v}_{j1}$ and $\mathbf{r}_{2,j}=\mathbf{v}_{j2}$; if $\mathbf{d}_{j2}=\mathbf{d}^{*}$, let  $\mathbf{r}_{1,j}=\mathbf{v}_{j2}$ and $\mathbf{r}_{2,j}=\mathbf{v}_{j1}$, for each $j=1,2,\cdots,\gamma$,. Then, we can get two sets $\{\mathbf{r}_{1,j} \mid j=1,2,\cdots,\gamma\}$ and $\{\mathbf{r}_{2,j} \mid j=1,2,\cdots,\gamma\}$, one of which is the set of all vector remainders of $\mathbf{f}_1$ and the other is the set of all vector remainders of $\mathbf{f}_2$. 

Combining the two cases above, we can always divide the vectors in all $\mathcal{S}_j$ into two sets, one of which is the set of all vector remainders of $\mathbf{f}_1$ and the other is the set of all vector remainders of $\mathbf{f}_2$. And in each set above, the correspondence between the vectors in the set and their respective matrix moduli is known.

Then, we can use use MD-CRT separately to solve the following two systems of congruence equations:
 \begin{equation}\label{twosystems}
  \begin{dcases}
    \begin{aligned}
      \mathbf{f}_1 &\equiv \mathbf{r}_{1,1} \mod{\mathbf{M}_1} \\
      \mathbf{f}_1 &\equiv \mathbf{r}_{1,2} \mod{\mathbf{M}_2} \\
       &\vdotswithin{\equiv}\\
      \mathbf{f}_1 &\equiv \mathbf{r}_{1,\gamma} \mod{\mathbf{M}_\gamma}
    \end{aligned}
  \end{dcases}
  \quad \text{and} \quad
  \begin{dcases}
    \begin{aligned}
      \mathbf{f}_2 &\equiv \mathbf{r}_{2,1} \mod{\mathbf{M}_1} \\
      \mathbf{f}_2 &\equiv \mathbf{r}_{2,2} \mod{\mathbf{M}_2} \\
      &\vdotswithin{\equiv}\\
      \mathbf{f}_2 &\equiv \mathbf{r}_{2,\gamma} \mod{\mathbf{M}_\gamma}
    \end{aligned}
  \end{dcases}
\end{equation}
and uniquely determine $\mathbf{f}_1$ and $\mathbf{f}_2$ when
  $\{\mathbf{f}_1,\mathbf{f}_2\}\subset \mathcal{N}(\mathbf{R})$.  
\end{proof}

A detailed algorithm for Theorem \ref{th:condition} is summarized in Algorithm \ref{alg:2vec} and we present an example to show how this algorithm works.

\begin{algorithm}
\caption{Reconstruction of Two Integer Vectors with Prior Information}
\label{alg:2vec}
\begin{algorithmic}[1]
\Require
$\gamma$ sets $\mathcal{S}_j=\{\mathbf{v}_{j1}, \mathbf{v}_{j2}\}$ and $\gamma$ non-singular integer matrices $\mathbf{M}_j$, $j=1,\ldots,\gamma$ with their lcrm $\mathbf{R}$
\Ensure
Two integer vectors $\mathbf{f}_1$ and $\mathbf{f}_2$

\For{each set $\mathcal{S}_j$}
    \State Compute vector remainders:
    \State $\mathbf{d}_{j1} \gets (\mathbf{v}_{j1} - \mathbf{v}_{j2}) \mod \mathbf{M}_j$
    \State $\mathbf{d}_{j2} \gets (\mathbf{v}_{j2} - \mathbf{v}_{j1}) \mod \mathbf{M}_j$
\EndFor

\State Find a common vector $\mathbf{d}^{*}$ such that $\forall j=1,\cdots,\gamma, \exists i\in\{1,2\}$ where $\mathbf{d}_{ji} = \mathbf{d}^{*}$

\For{each $j=1,2,\cdots,\gamma$}
    \If{$\mathbf{d}_{j1} = \mathbf{d}^{*}$}
        \State $\mathbf{r}_{1,j} \gets \mathbf{v}_{j1}$, $\mathbf{r}_{2,j} \gets \mathbf{v}_{j2}$
    \Else
        \State $\mathbf{r}_{1,j} \gets \mathbf{v}_{j2}$, $\mathbf{r}_{2,j} \gets \mathbf{v}_{j1}$
    \EndIf
\EndFor

\State Solve congruence systems (\ref{twosystems}) using MD-CRT
\State \Return $\mathbf{f}_1$, $\mathbf{f}_2$

\end{algorithmic}
\end{algorithm}

\begin{example}\label{ex:2}
   Consider the matrix moduli as follows:
    \begin{equation}\notag
\begin{aligned} 
\mathbf{M}_{1} &=
    \begin{pmatrix}
        4 & 1 \\
        1 & 1 \\
    \end{pmatrix},
    \mathbf{M}_{2} =
    \begin{pmatrix}
        3 & 3 \\
        1 & 2 \\
    \end{pmatrix},\\ 
    \mathbf{M}_{3} &=
    \begin{pmatrix}
        2 & 1 \\
        0 & 2 \\
    \end{pmatrix},
    \mathbf{M}_{4} =
    \begin{pmatrix}
        5 & 1 \\
        1 & 1 \\
    \end{pmatrix}.
\end{aligned}
\end{equation}

All the FPDs of these four matrices are shown in Fig. \ref{fig:ex}. One can easily see in the figure that vector 
$$[2,1]^{\top} \in \bigcap\limits_{j=1}^{4} \mathcal{N}(\mathbf{M}_j).$$
Besides, we can use the algorithm in \cite{MD2,remainder} to get an lcrm of these four matrices:
\begin{equation*}
  \mathbf{R}=
  \begin{pmatrix}
        12 & 0 \\
        0 & 12 \\
  \end{pmatrix}.
\end{equation*}
One can check that 
$$[2,1]^{\top} \in \Big{(} LAT(\mathbf{R}) + \bigcap\limits_{j=1}^{4}\mathcal{N}(\mathbf{M}_j)\Big{)} \setminus \bigcup\limits_{j=1}^{\gamma}\mathcal{L}_{1/2}(\mathbf{M}_j).$$
\begin{figure}[htbp]
    \centering
    \includegraphics[width=\columnwidth]{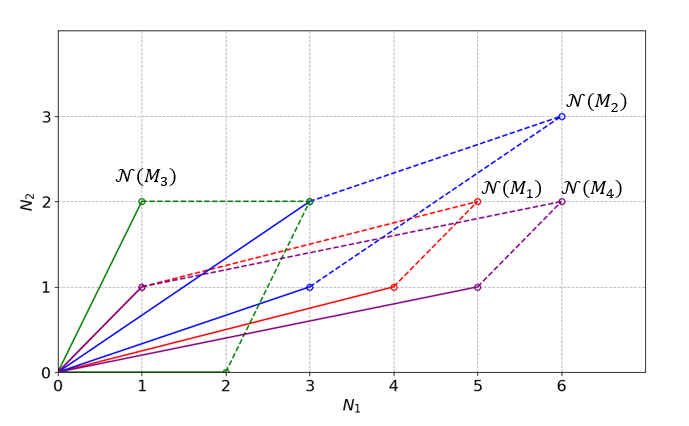}  
    \caption{FPDs of the four matrices in Example \ref{ex:2}}
    \label{fig:ex}  
\end{figure}

Let $\mathbf{f}_1=[10,7]^{\top}$ and $\mathbf{f}_2=[8,6]^{\top}$. They all in set $\mathcal{N}(\mathbf{R})$ and satisfy the condition (\ref{condition}) in Theorem \ref{th:condition}.
By calculating the vector remainders, we can get the following residue sets:
\begin{equation}\notag
\begin{aligned}
    \mathcal{S}_1 &= \{[0,0]^{\top},[3,1]^{\top}\}, \
    \mathcal{S}_2 = \{[4,2]^{\top},[2,1]^{\top}\},\\
    \mathcal{S}_3 &= \{[1,1]^{\top},[1,0]^{\top}\}, \
    \mathcal{S}_4 = \{[4,1]^{\top},[3,1]^{\top}\},\\
\end{aligned}    
\end{equation}
where the first vector in $\mathbf{S}_j$ is the vector remainder of $\mathbf{f}_1$ modulo $\mathbf{M}_j$ and the second vector is the vector remainder of $\mathbf{f}_2$ modulo $\mathbf{M}_j$, for $j=1,2,3,4$.

Next, we show how to use Algorithm \ref{alg:2vec} to determine $\mathbf{f}_1$ and $\mathbf{f}_2$.
Without loss of generality, the first vector in $\mathcal{S}_j$ is denoted by $\mathbf{v}_{j1}$ and the second is denoted by $\mathbf{v}_{j2}$. Then, we can get the following sets $\mathcal{D}_j$ by calculating $\mathbf{d}_{j1}$ and $\mathbf{d}_{j2}$:
\begin{equation}\notag
\begin{aligned}
    \mathcal{D}_1 &= \{[2,1]^{\top},[3,1]^{\top}\},\
    \mathcal{D}_2 = \{[2,1]^{\top},[4,2]^{\top}\},\\
    \mathcal{D}_3 &= \{[2,1]^{\top},[1,1]^{\top}\}, \
    \mathcal{D}_4 = \{[2,1]^{\top},[4,1]^{\top}\},\\
\end{aligned}    
\end{equation}
where the first vector in $\mathcal{D}_j$ is $\mathbf{d}_{j1}$ and the second vector is $\mathbf{d}_{j2}$. Then, we can find $\mathbf{d}^{*}=[2,1]^{\top}$ such that there is exactly one vector in each set $\mathcal{D}_j$ that is equal to $\mathbf{d}^{*}$ for $j=1,2,3,4$. Since $\mathbf{d}_{j1}=\mathbf{d}^{*}$ for all $j=1,2,3,4$, we can get that $\mathbf{r}_{1,1}=[0,0]^{\top},\mathbf{r}_{1,2}=[4,2]^{\top},\mathbf{r}_{1,3}=[1,1]^{\top},\mathbf{r}_{1,4}=[4,1]^{\top}$ and $\mathbf{r}_{2,1}=[3,1]^{\top},\mathbf{r}_{2,2}=[2,1]^{\top},\mathbf{r}_{2,3}=[1,0]^{\top},\mathbf{r}_{2,4}=[3,1]^{\top}$. Then, by solving   
\begin{equation*}
      \mathbf{f}_i \equiv \mathbf{r}_{i,j} \mod{\mathbf{M}_j}, \ \text{for} \ j=1,2,3,4,
\end{equation*}
for $i=1,2$ and
we can get $\mathbf{f}_1=[10,7]^{\top}$ and $\mathbf{f}_2=[8,6]^{\top}$.
\end{example}
 
\subsection{Comparisons with Previous Results in One-Dimensional Case}

For one dimensional generalized CRT, conditions on unknown integers to determine from their residue sets modulo several moduli have been obtained \cite{zhou,xiao15} to achieve the maximal possible dynamic range, i.e., the lcm of all the moduli. Interestingly, in this subsection, we show that the new condition we have obtained in the previous subsection for generalized MD-CRT is weaker than all the known ones in \cite{zhou,xiao15},  when the dimension D reduces to 1.

Let $N_1 < N_2$ be two distinct positive integers to be determined, and $1<m_1<m_2<\cdots<m_{\gamma}$ be $\gamma$ integers as moduli. Let $r_{i,j}$ be the remainder of $N_i$ modulo $m_j$ for $i=1,2$ and $j=1,2,\cdots,\gamma$, and $\mathcal{B}_j=\{r_{1,j},r_{2,j}\}$ be the set of remainders of $N_1$ and $N_2$ modulo $m_j$, for $j=1,2,\cdots,\gamma$. In the classical CRT, when we use $m_1,m_2,\cdots,m_{\gamma}$ as moduli, we can uniquely reconstruct a positive integer, if it is lower than the $lcm(m_1,m_2,\cdots,m_{\gamma})$, denoted by $M$, which is also the maximal possible dynamic range for generalized CRT of two integers. In general, this maximal possible dynamic range may not be achieved for two integers as we have analyzed before. However, under some conditions on the two unknown integers, it may be achieved as studied in \cite{zhou,xiao15} with detailed determination algorithms. These conditions are stated below.

\begin{prop}\label{prop:cond1}\cite{zhou} 
    If $N_2-N_1 < m_1/2$, then we can uniquely determine $N_1$ and $N_2$ in $[0,M)$ from the sets $\mathcal{B}_j$ for $j=1,2,\cdots,\gamma$.
\end{prop}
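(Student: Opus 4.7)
The plan is to specialize the reasoning of Theorem \ref{th:condition} to one dimension, exploiting the fact that the hypothesis $N_2-N_1<m_1/2$ is a clean scalar analogue of the condition (\ref{condition}). The only obstacle to uniqueness is the unknown correspondence between the two elements of each $\mathcal{B}_j$ and the two unknowns $N_1,N_2$; once this correspondence is resolved, the classical CRT delivers $N_1$ and $N_2$ separately in $[0,M)$.

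Set $d:=N_2-N_1$, so $0<d<m_1/2\le m_j/2$ for every $j$. Since $r_{2,j}-r_{1,j}\equiv d\pmod{m_j}$ and $r_{1,j}-r_{2,j}\equiv m_j-d\pmod{m_j}$, and $d\ne m_j-d$, the two remainders in $\mathcal{B}_j$ are distinct, so $|\mathcal{B}_j|=2$; write $\mathcal{B}_j=\{v_{j1},v_{j2}\}$ in some arbitrary order. For each $j$, compute $d_{j1}:=(v_{j1}-v_{j2})\bmod m_j$ and $d_{j2}:=(v_{j2}-v_{j1})\bmod m_j$. Exactly one of them equals $d$ and the other equals $m_j-d$, and since $d<m_j/2$, the one equal to $d$ is strictly the smaller. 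Pick any single $j$ (say $j=1$) and let $d^{*}$ be the smaller of $d_{11},d_{12}$. This recovers $d$ unambiguously.

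Next, for every $j=1,\dots,\gamma$, resolve the labelling using $d^{*}$: if $d_{j1}=d^{*}$, set $r_{1,j}:=v_{j2}$ and $r_{2,j}:=v_{j1}$; otherwise set $r_{1,j}:=v_{j1}$ and $r_{2,j}:=v_{j2}$. By construction the convention $r_{2,j}-r_{1,j}\equiv d\pmod{m_j}$ is respected for every $j$. Finally, apply the classical CRT separately to the two systems $\{N_1\equiv r_{1,j}\pmod{m_j}\}_{j=1}^{\gamma}$ and $\{N_2\equiv r_{2,j}\pmod{m_j}\}_{j=1}^{\gamma}$; since each unknown lies in $[0,M)$, each system has a unique solution, namely $N_1$ and $N_2$ themselves.

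The subtle point I expect to be the main thing to spell out carefully is the consistency of $d^{*}$: a priori one might worry that different $j$ yield different smaller-differences, but the identity $r_{2,j}-r_{1,j}\equiv d\pmod{m_j}$ together with $0<d<m_j$ shows that the smaller of $d_{j1},d_{j2}$ is always the integer $d$ itself, so any choice of $j$ produces the same $d^{*}$. The strictness $d<m_j/2$ (equivalently $d\ne m_j-d$) is exactly where the bound $m_1/2$ is tight: relaxing it to $d\le m_1/2$ would allow $d=m_j-d$ for some $j$ and the disambiguation step would fail, which pins down why $m_1/2$ is the natural threshold in this proof strategy.
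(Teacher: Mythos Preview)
Your proof is correct. The paper does not give a standalone proof of this proposition---it is cited from \cite{zhou}---but in the comparison subsection it shows that the hypothesis $N_2-N_1<m_1/2$ implies the condition of the Corollary (the one-dimensional specialization of Theorem~\ref{th:condition}), which is exactly the route you follow. Your use of the \emph{minimum} of $d_{j1},d_{j2}$ to pin down $d^{*}$ is a neat shortcut specific to the bound $m_1/2$; the paper's Algorithm~\ref{alg:2vec} instead identifies $d^{*}$ as the value common to all sets $\mathcal{D}_j$, which is what is needed under the weaker hypothesis of Theorem~\ref{th:condition} but reduces to your min-trick here.
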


\begin{prop}\label{prop:cond2}\cite{xiao15}
    If $N_2-N_1 < m_1$, $gcd(2,m_j)=1$ for $j=1,2,\cdots,\gamma$, $\gamma \geq 2$, then we can uniquely determine $N_1$ and $N_2$ in $[0,M)$ from the sets $\mathcal{B}_j$ for $j=1,2,\cdots,\gamma$.
\end{prop}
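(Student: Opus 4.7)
The plan is to derive Proposition \ref{prop:cond2} as a one-dimensional specialization of Theorem \ref{th:condition}, thereby making the comparison claimed in the introduction transparent. First I would translate the notation: in one dimension each matrix modulus $\mathbf{M}_j$ becomes the scalar $m_j$, $\mathcal{N}(m_j)=\{0,1,\ldots,m_j-1\}$, $LAT(m_j)=m_j\mathbb{Z}$, and an lcrm reduces to $M=\mathrm{lcm}(m_1,\ldots,m_\gamma)$. Because $m_1<m_2<\cdots<m_\gamma$, one has $\bigcap_{j=1}^{\gamma}\mathcal{N}(m_j)=\{0,1,\ldots,m_1-1\}$. The key observation about the half-lattice is that $\mathcal{L}_{1/2}(m_j)=\{m_j n/2:n\in\mathbb{Z}\}\cap\mathbb{Z}$ equals $m_j\mathbb{Z}=LAT(m_j)$ exactly when $\gcd(2,m_j)=1$, since for odd $m_j$ the quantity $m_j n/2$ is an integer only when $n$ is even.

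Next I would verify the hypotheses of Theorem \ref{th:condition}. With $N_2>N_1$ we have $0<N_2-N_1<m_1$. Choosing the coset representative with multiplier $0$, we get $N_2-N_1\in\{1,\ldots,m_1-1\}\subset M\mathbb{Z}+\bigcap_j\mathcal{N}(m_j)$. Under the odd-moduli hypothesis, $\bigcup_j\mathcal{L}_{1/2}(m_j)=\bigcup_j m_j\mathbb{Z}$, and since $1\le N_2-N_1<m_1\le m_j$ for every $j$, the integer $N_2-N_1$ is neither zero nor a nonzero multiple of any $m_j$. Hence $N_2-N_1$ lies in the set on the right-hand side of (\ref{condition}). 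Because $\{N_1,N_2\}\subset[0,M)=\mathcal{N}(M)$ by hypothesis, Theorem \ref{th:condition} applies directly and yields unique reconstruction of $N_1$ and $N_2$ from the residue sets $\mathcal{B}_j$.

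The main delicate step is the characterization $\mathcal{L}_{1/2}(m_j)=m_j\mathbb{Z}$ under $\gcd(2,m_j)=1$, which is precisely what removes the parasitic half-lattice point $m_j/2$ from $\bigcup_j\mathcal{L}_{1/2}(m_j)$. This is also where the new condition in Theorem \ref{th:condition} is visibly broader than Proposition \ref{prop:cond2}: even moduli are permitted provided the difference avoids the enlarged union of half-lattices, and the restriction $N_2-N_1<m_1$ can be relaxed to differences of the form $Mn+r$ with $n\ne 0$ and $0\le r<m_1$, which explains why differences close to the maximal dynamic range are also admissible under the new result.
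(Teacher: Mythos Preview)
Your argument is correct and, in fact, mirrors exactly what the paper does. Note that the paper does not give an independent proof of Proposition~\ref{prop:cond2}; it is quoted from \cite{xiao15}. What the paper does provide, in Section~\ref{s4}.B, is the observation that when all $m_j$ are odd one has
\[
\{1,2,\ldots,m_1-1\}\cap\{km_j/2:k\in\mathbb{Z},\,1\le j\le\gamma\}=\varnothing,
\]
so that the hypothesis of Proposition~\ref{prop:cond2} is contained in the one-dimensional specialization of condition~(\ref{condition}) in Theorem~\ref{th:condition}. Your derivation is precisely this containment argument, spelled out in slightly more detail: you make explicit that $\gcd(2,m_j)=1$ forces $\mathcal{L}_{1/2}(m_j)=m_j\mathbb{Z}$, and that $0<N_2-N_1<m_1$ then avoids every $m_j\mathbb{Z}$. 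So your route and the paper's are the same; you have simply framed as a proof what the paper frames as a comparison.

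One small remark on your closing paragraph: when you say the restriction can be relaxed to differences of the form $Mn+r$ with $n\ne0$, bear in mind that once $N_1,N_2\in[0,M)$ the difference $N_2-N_1$ lies in $(-M,M)$, so effectively only the two ranges $\{1,\ldots,m_1-1\}$ and $\{M-m_1+1,\ldots,M-1\}$ survive, exactly as stated in the Corollary. Your general statement is correct for Theorem~\ref{th:condition} itself, just not once the dynamic-range constraint is imposed.
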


\begin{prop}\label{prop:cond3}\cite{xiao15}
    If $N_2-N_1 < m_1$, $gcd(2,m_j)=1$ for $j=1,2,\cdots,\gamma-1$, and $m_{\gamma}-2m_1>0$, then we can uniquely determine $N_1$ and $N_2$ in $[0,M)$ from the sets $\mathcal{B}_j$ for $j=1,2,\cdots,\gamma$.
\end{prop}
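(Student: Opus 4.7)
The plan is to first pin down the difference $d := N_2 - N_1$ from the data, then use $d$ to resolve the correspondence between remainders in each $\mathcal{B}_j$ and the two unknowns $N_1, N_2$, and finally apply the classical CRT separately to the two resulting remainder sequences. This mirrors the strategy of Theorem \ref{th:condition} in the multidimensional case: once correspondence is resolved, the problem collapses to standard CRT applied to each of $N_1$ and $N_2$ individually.

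The first step is a preliminary observation. Since $0 < d < m_1 \le m_j$, the congruence $N_2 - N_1 \equiv r_{2,j} - r_{1,j} \pmod{m_j}$ forces the two elements of $\mathcal{B}_j = \{a_j, b_j\}$ (written with $a_j < b_j$) to satisfy $\delta_j := b_j - a_j \in \{d,\ m_j - d\}$. If $\delta_j = d$ there was no wrap-around, so $r_{1,j} = a_j$ and $r_{2,j} = b_j$; if $\delta_j = m_j - d$ then $r_{2,j} = a_j$ and $r_{1,j} = b_j$. Thus resolving the labeling inside each $\mathcal{B}_j$ is equivalent to deciding which of the two possibilities $\delta_j = d$ or $\delta_j = m_j - d$ holds. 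For $j < \gamma$, the hypothesis $\gcd(2,m_j)=1$ rules out the degenerate coincidence $d = m_j - d$, guaranteeing that the two candidates for $d$ extracted from $\mathcal{B}_j$ are genuinely distinct; however, a single odd modulus on its own cannot distinguish them.

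The second and decisive step is to read $d$ off from $\mathcal{B}_\gamma$ using the gap hypothesis $m_\gamma > 2m_1$. Here $d < m_1$, whereas $m_\gamma - d > m_\gamma - m_1 > m_1$, so exactly one of the two candidates $\delta_\gamma$ and $m_\gamma - \delta_\gamma$ lies in the interval $(0, m_1)$, and that one must equal $d$. Equivalently, the simple threshold test ``is $\delta_\gamma < m_1$?'' detects whether wrap-around occurred modulo $m_\gamma$ and simultaneously yields the value of $d$. With $d$ in hand, for every $j$ I check which of $\delta_j$ or $m_j - \delta_j$ equals $d$ (for odd $m_j$ this choice is unique by the first step, and for $m_j = m_\gamma$ it was already fixed), thereby identifying $r_{1,j}$ and $r_{2,j}$ inside each $\mathcal{B}_j$. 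Applying the classical CRT separately to $(r_{1,j})_{j=1}^{\gamma}$ and $(r_{2,j})_{j=1}^{\gamma}$ then recovers $N_1, N_2 \in [0,M)$ uniquely.

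The main obstacle -- and the reason the hypotheses take the form they do -- is the unambiguous extraction of $d$. Without $\gcd(2,m_j)=1$ for $j<\gamma$, one risks the coincidence $d = m_j - d$ on those moduli, which would collapse the two candidates into a single value and destroy the bookkeeping when we later try to select the correct one. Without $m_\gamma > 2m_1$, the two candidates arising from $\mathcal{B}_\gamma$ could both fall below $m_1$ or both above $m_1$, so no size comparison separates them, and the ambiguity cannot be resolved from residues alone. Once these two structural hypotheses produce $d$, everything downstream is routine.
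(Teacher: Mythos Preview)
The paper does not supply a proof of this proposition: it is quoted verbatim from \cite{xiao15} purely as a benchmark against which the authors compare their own Corollary derived from Theorem~\ref{th:condition}. So there is no in-paper argument to match yours against.

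That said, your proof is correct and self-contained. The key observations are exactly right: (i) since $0<d<m_1\le m_j$, each residue set $\mathcal{B}_j$ has two distinct elements and the ordered gap $\delta_j$ equals either $d$ or $m_j-d$; (ii) the hypothesis $m_\gamma>2m_1$ forces $m_\gamma-d>m_1>d$, so the threshold test at $m_1$ isolates $d$ from $\mathcal{B}_\gamma$ alone; (iii) oddness of $m_j$ for $j<\gamma$ prevents the collapse $d=m_j-d$, so once $d$ is known the labeling inside every $\mathcal{B}_j$ is forced; (iv) classical CRT on the two resulting remainder sequences recovers $N_1,N_2$ uniquely in $[0,M)$. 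Your closing paragraph correctly diagnoses why both structural hypotheses are needed for this particular algorithm. The argument is in the spirit of the paper's own proof of Theorem~\ref{th:condition} (first recover the difference, then sort the residues), specialized to one dimension with the size comparison $m_\gamma>2m_1$ playing the role that the lattice condition plays there.
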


Our new condition, as presented below, can be derived from Theorem \ref{th:condition}. Accordingly, the algorithm for determining the two integers can be directly obtained from Algorithm \ref{alg:2vec}.

\begin{cor}
    If $N_2-N_1 \in \{1,2,\cdots,m_1-1,M-m_1+1,\cdots,M-2,M-1\}\setminus \{km_1/2,km_2/2,\cdots,km_{\gamma}/2\mid k \in \mathbb{Z}\}$, then we can uniquely determine $N_1$ and $N_2$ in $[0,M)$ from the sets $\mathcal{B}_j$ for $j=1,2,\cdots,\gamma$.
\end{cor}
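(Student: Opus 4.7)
My plan is to derive the Corollary as the direct one-dimensional specialization of Theorem \ref{th:condition}, by rewriting each of the sets appearing in condition (\ref{condition}) when $D=1$ and the matrix moduli $\mathbf{M}_j$ collapse to positive integers $m_j$. Once each piece is identified explicitly, the statement will follow by bookkeeping on which integer difference lies in the resulting set and by using the obvious symmetry of the excluded set.

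First I would translate the ingredients. With $D=1$ and moduli $m_1<m_2<\cdots<m_\gamma$, an lcrm $\mathbf{R}$ of all matrix moduli becomes $M=\mathrm{lcm}(m_1,\dots,m_\gamma)$, so $LAT(\mathbf{R})=M\mathbb{Z}$. The FPD of $m_j$ is the interval $\{0,1,\dots,m_j-1\}$, hence $\bigcap_{j=1}^{\gamma}\mathcal{N}(m_j)=\{0,1,\dots,m_1-1\}$ since $m_1$ is the smallest. Consequently,
\[
LAT(\mathbf{R})+\bigcap_{j=1}^{\gamma}\mathcal{N}(m_j)=M\mathbb{Z}+\{0,1,\dots,m_1-1\}.
\]
Finally, by the very definition of $\mathcal{L}_{1/2}$, in dimension one
\[
\bigcup_{j=1}^{\gamma}\mathcal{L}_{1/2}(m_j)=\bigl\{km_j/2:k\in\mathbb{Z},\ 1\leq j\leq\gamma\bigr\}.
\]

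Next I would apply Theorem \ref{th:condition} with $\mathbf{f}_1=N_1$ and $\mathbf{f}_2=N_2$, and analyze the ``or'' in condition (\ref{condition}). Set $d:=N_2-N_1$, so $1\leq d\leq M-1$. The condition $d\in LAT(\mathbf{R})+\bigcap\mathcal{N}(m_j)$ forces $d\in M\mathbb{Z}+\{0,\dots,m_1-1\}$, and since $0<d<M$ this reduces exactly to $d\in\{1,\dots,m_1-1\}$. Similarly, the condition $-d\in LAT(\mathbf{R})+\bigcap\mathcal{N}(m_j)$ gives $M-d\in\{0,\dots,m_1-1\}$, i.e.\ $d\in\{M-m_1+1,\dots,M-1\}$. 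Taking the union yields the ``inclusion'' part of the set in the Corollary.

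For the subtraction, I would use the symmetry $x\in\bigcup_j\mathcal{L}_{1/2}(m_j)\iff -x\in\bigcup_j\mathcal{L}_{1/2}(m_j)$ so that excluding this set from the ``or'' condition on $\pm d$ amounts to excluding $\{km_j/2:k\in\mathbb{Z},\ 1\leq j\leq\gamma\}$ from $d$ itself. Combining everything, the hypothesis of Theorem \ref{th:condition} is satisfied precisely when
\[
d\in\bigl(\{1,\dots,m_1-1\}\cup\{M-m_1+1,\dots,M-1\}\bigr)\setminus\{km_j/2:k\in\mathbb{Z},\ j=1,\dots,\gamma\},
\]
which is the hypothesis of the Corollary, so uniqueness of $N_1,N_2$ in $[0,M)$ follows. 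The only delicate point, and the one I would check carefully, is the handling of the boundary values $d=0$ and $d=M$ (automatically excluded because $N_1\neq N_2$ and $N_1,N_2\in[0,M)$) and verifying that the ``$k=0$'' case in the shift-by-$M\mathbb{Z}$ argument is the only one contributing in each of the two sub-cases; this is straightforward from $0<d<M$ but is the only nontrivial bookkeeping in an otherwise mechanical reduction.
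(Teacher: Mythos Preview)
Your proposal is correct and is exactly the approach the paper intends: the paper does not give a separate proof of the Corollary but simply states that it ``can be derived from Theorem~\ref{th:condition},'' and your argument carries out precisely this one-dimensional specialization, with the correct identifications $LAT(\mathbf{R})=M\mathbb{Z}$, $\bigcap_j\mathcal{N}(m_j)=\{0,\dots,m_1-1\}$, and $\bigcup_j\mathcal{L}_{1/2}(m_j)=\{km_j/2:k\in\mathbb{Z},\,1\le j\le\gamma\}$. Your use of the symmetry of $\bigcup_j\mathcal{L}_{1/2}(m_j)$ to factor the ``or'' condition into a single constraint on $d=N_2-N_1$ is also correct and is the only step beyond mechanical substitution.
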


Next, we show that our newly obtained condition is much weaker (better) than the above existing conditions. Firstly, when we use the same group of moduli, our prior condition on $N_2-N_1$ is weaker than that of all three propositions above.

The condition in Prop. \ref{prop:cond1} is equivalent to $N_2-N_1 \in \{1,2,\cdots,\lceil m_1/2\rceil -1\}$. Since $$\lceil m_1/2\rceil -1<m_1/2<m_2/2<\cdots<m_{\gamma}/2,$$
we can get
\begin{align*}
\{1, 2, \cdots, \lceil m_1& / 2 \rceil - 1\} 
\subset 
\{1,\cdots, m_1 - 1, M - m_1 + 1,\\ &\cdots, M - 1\} 
\setminus \{km_1/2, \cdots, km_{\gamma}/2 \mid k \in \mathbb{Z}\},
\end{align*}
where the right hand side is that in Theorem \ref{th:condition} when the vector dimension is 1.

As for the above Props. \ref{prop:cond2} and \ref{prop:cond3}, the prior condition on $N_2-N_1$ is equivalent to $N_2-N_1 \in \{1,2,\cdots,m_1-1\}$ with an additional condition on moduli that is not needed in our new result in Theorem \ref{th:condition}. When the condition on moduli in Prop. \ref{prop:cond2} or Prop. \ref{prop:cond3} is satisfied, we have
$$ \{1,2,\cdots,m_1-1\} \cap \{km_1/2,\cdots,km_{\gamma}/2 \mid k \in \mathbb{Z}\} = \varnothing.$$
As a result, when the same group of moduli that satisfy the condition in Prop. \ref{prop:cond2} or Prop. \ref{prop:cond3} is chosen, we have 
\begin{align*}
\{1, 2, \cdots, m_1- 1\} 
&\subset 
\{1,\cdots, m_1 - 1, M - m_1 + 1, \cdots,\\& M - 1\} 
\setminus \{km_1/2, \cdots, km_{\gamma}/2 \mid k \in \mathbb{Z}\}.
\end{align*}

\begin{example}
   Consider a group of moduli $\{5,7,9,11\}$ that satisfies both conditions in Props. \ref{prop:cond2} and \ref{prop:cond3}. When this group is used, the prior condition on $N_2-N_1$ in Prop. \ref{prop:cond1} becomes $$N_2-N_1\in\{1,2\};$$ the prior condition on $N_2-N_1$ in Props. \ref{prop:cond2} and \ref{prop:cond3} becomes $$N_2-N_1\in\{1,2,3,4\};$$ the prior condition on $N_2-N_1$ in our result becomes $$N_2-N_1\in\{1,2,3,4,3461,3462,3463,3464\}.$$
\end{example}

In some applications of CRT, such as modulo samplers used in self-reset analog-to-digital converters (SR-ADC) \cite{lugan2}, we usually choose moduli in a given size, i.e., the maximal moduli is less than a given positive integer. The dynamic range (the lcm of all the moduli) depends on the choices of moduli and we usually want to achieve a dynamic range as large as possible. Next, we show that our result provides much more possible choices of moduli, thus making it possible to achieve a larger dynamic range while ensures that the prior condition on $N_2-N_1$ provided by our new result is weaker than those in \cite{zhou,xiao15}. Since there is no constraint on moduli in Prop. \ref{prop:cond1}, we only consider Props. \ref{prop:cond2} and \ref{prop:cond3} below by an example.

\begin{example}
    Assume that all the moduli are less than 12 and only $4$ moduli can be chosen.

    In Prop. \ref{prop:cond2}, if we want to achieve the largest possible dynamic range, the moduli we choose are $5,7,9,11$. In this case, the largest possible dynamic range is $3465$ and the condition on $N_2-N_1$ is $$N_2-N_1\in\{1,2,3,4\}.$$

    In Prop. \ref{prop:cond3}, if we want to achieve the largest possible dynamic range, the moduli we choose are also $5,7,9,11$. So, the dynamic range and the condition on $N_2-N_1$ are the same with that of Prop. \ref{prop:cond2}.

    In our new condition, if we want to achieve the largest possible dynamic range, the moduli we choose are $7,9,10,11$, and the largest dynamic range is $6930$. The condition $N_2-N_1$ becomes $$N_2-N_1\in\{1,2,3,4,6,6924,6926,6927,6928,6929\}.$$

    In comparison, our new result can achieve a larger dynamic range as well as a larger range for $N_2-N_1$.
\end{example}

The improvement over the previous results is achieved because, from the perspective of lattices, we establish a broader sufficient condition under which the size of each residue set precisely matches the number of vectors to be determined.

\section{Conclusion}\label{s5}

In this paper, we have presented a generalized MD-CRT for multiple integer vectors and investigated two key questions. First, we have established a uniquely determinable range for multiple integer vectors and proposed an efficient algorithm for their determinations. Then, we have presented a novel condition for achieving the maximal possible dynamic range when dealing with two integer vectors. Notably, our new condition is strictly weaker than all the previously known ones when reduced to the one-dimensional case. However, determining the condition for achieving the maximal possible dynamic range in the case of more than two unknown vectors remains open.

%%%%%%%%%%%%%%%%%%%%%%%%%%%%%%%%%%%%%%%%%%%%%%%%%%%%%%%%%%%%%%%%%%%%%%%%%%%

\end{document}